\newcommand{\semb}{[ \! [}
\newcommand{\seme}{] \! ]}
\newtheorem{theo}{Theorem}%[subsection]
\newtheorem{defi}[theo]{Definition}
\newtheorem{deflem}[theo]{Definition \& Lemma}
\newtheorem{lemm}[theo]{Lemma}
\newtheorem{prop}[theo]{Proposition}
\newtheorem{exem}[theo]{Example}
\def\norm#1{\mbox{$\| #1 \|$}}
\def\normA#1{\mbox{${\| #1 \|}_A$}}
\def\normL#1{\mbox{${\| #1 \|}_L$}}
\def\stackreh#1#2{\mathrel{\mathop{#1}\limits_{#2}}}
\def\argmin#1#2{\stackreh{\mbox{argmin }}{\tiny #1 \wedge #2 \leq \alpha \leq #1 \vee #2} |\alpha|}
\def\argmax#1#2{\stackreh{\mbox{argmax }}{\tiny #1 \wedge #2 \leq \alpha \leq #1 \vee #2} |\alpha|}
\def\im{\mbox{Im }}
\def\RAff{{\Bbb A \Bbb R}}
\def\RA{{\Bbb A}}
\def\R{{\Bbb R}}
\def\I{{\Bbb I \Bbb R}}
\def\var{\mbox{Var}}
 \newcommand\ForAuthors[1]%          %  temporary remark for the
\begin{document}
\bibliographystyle{plain}

\copyrightyear{2009}
\copyrightdata{}
\title{Perturbed affine arithmetic for invariant computation in
              numerical program analysis\thanks{This material is based upon work supported
ANR project EvaFlo, and ITEA 2 project ES\_PASS.}}
\authorinfo{Eric Goubault\and Sylvie Putot}
{CEA LIST, Modelisation and Analysis of Systems in Interaction}
{\{eric.goubault,sylvie.putot\}@cea.fr}

\maketitle
\begin{abstract}
We completely describe a new domain for abstract interpretation of 
numerical programs. Fixpoint iteration in this domain is {\em proved} to converge to finite precise
invariants for (at least) the class of {\em stable linear recursive filters of
any order}. Good evidence shows it behaves well also for some non-linear
schemes. The result, and the structure of the domain, rely on an interesting interplay between order
and topology. 
\end{abstract}
\category{D.2.4}{Software/Program Verification}
[Validation]
\category{F.3.1}{Specifying and Verifying and Reasoning about Programs}
[Mechanical verification]
\category{F.3.2}{Semantics of Programming Languages}
[Program analysis]
\terms
Theory, Verification
\keywords
Abstract interpretation, numerical programs

\section{Introduction}

An everlasting challenge of the verification of programs involving
numerical computations is to efficiently find accurate invariants for values of variables. 
Even though machine computations use finite precision arithmetic, it is important to rely on the 
properties of real numbers and estimate the real number values of the program variables first, before even trying to characterize the floating-point number
invariants. 
We refer the reader to \cite{SAS06}, which describes a way to go from this to floating-point analysis, or to the static linearization techniques
of \cite{Mine}.

In \cite{SAS06} as well, 
some first ideas about an abstract interpretation
domain which would be expressive enough for deriving these invariants, 
were sketched. It relied on a more 
accurate alternative to interval arithmetic: affine arithmetic,   
the concretization of which is a center-symmetric polytope. But, contrarily to
existing numerical relational 
abstract domains with polyhedral concretization (polyhedra \cite{Polyedres}
of course, but also zones, octagons \cite{Octagon} etc.), 
dependencies in affine arithmetic are implicit, making the semantics very economical.
Also, affine arithmetic is close to Taylor models, which can be exploited to give precise abstractions 
of non-linear computations.

But these advantages are at a theoretical cost: the partial order and the correctness of the abstract 
computations are intricate to find and prove.
In this article, we construct a ``quasi'' lattice abstract domain, and study the convergence of fixpoint computations.
We show how the result of the join operators we define can be considered as a perturbation of the affine forms, 
and thus how the fixpoint iteration can be seen as a perturbation of the numerical schemes we analyze.
A crucial point is that our abstract domain is both almost a bounded complete lattice, and an ordered Banach space,
 where approximation theorems and convergence properties of numerical schemes naturally fit.
As an application of the framework, we prove that our approach allows us to accurately bound the values 
of variables for stable linear recursive filters of any order.

\paragraph{Contributions}
This article fully describes a general ``completeness'' result of the abstract domain, 
for a class of numerical programs (linear recursive filters of any
order), meaning that we prove that the abstract
analysis results will end up with finite numerical bounds whenever the 
numerical scheme analyzed has this property. We also show good evidence
that, on this class of programs, we can get as close an over-approximation of the real
result as we want. 

The abstract domain on which we prove this result is a generalization of the
one of \cite{SAS06}; better join and meet operators are described,
and the full order-theoretic structure is described (sketches of proofs
are given). 

A new feature of this domain, with respect to the other
numerical abstract domains, is that it does not only have an order-theoretic
structure, but also a topological one, the interaction of which 
plays an important role in our results. The domain is an ordered Banach
space, ``almost'' a Riesz space, which are structures of interest in 
functional analysis and optimization theory. This is not just a coincidence: 
correctness of the abstraction relies on the correctness of functional
evaluations in the future, i.e. continuations. This opens up promises
for useful generalizations and new techniques for solving the 
corresponding semantic equations. 

\paragraph{Contents}
Section \ref{sec_problem} introduces the general problematic of finding
precise invariants for numerical programs, and defines an interesting
sub-class of problems, that is linear recursive filters of any order.
We also introduce the classical affine forms \cite{Stolfi} introduced in numerical
mathematics, on which our work elaborates. 

Section \ref{poBanach} extends these affine forms to deal with 
{\em static analysis invariants}. We show that the set of such generalized
forms has the structure of an {\em ordered Banach space}, which almost has least
upper bounds and greatest lower bounds: it actually only has {\em
maximal lower bounds} and {\em minimal upper bounds}, in general. An equivalent
of {\em bounded completeness} is proved using the interplay between
the partial order and the topology (from the underlying Banach space). 

We develop particular Kleene iteration techniques  in 
Section \ref{convsection}. With these, we prove that we can find
finite bounds for the invariants of {\em stable} linear recursive filters.
We also show evidence that these abstractions give good result in practice,
using our current C implementation of the abstract semantics. 

Finally, we give hints about current and future work in Sections \ref{improv} 
and \ref{conclusion}.
%In order to have better results, and attack non-linear schemes, we 
%give hints about how to improve the iteration schemes we studied previously,
%in Section \ref{improv}.
%Finally, we hint our future work in Section \ref{conclusion}.

\section{Problematic}
\label{sec_problem}
We are interested in {\em numerical schemes} in the large. This includes
signal processing programs, control programs such as the ones used in
aeronautics, automotive and space industry, libraries for computing
transcendental functions, and as a long-term goal, simulation programs (including
the solutions of ordinary or partial differential equations). The context of our work 
is the determination of the
accuracy reached by the finite precision (generally IEEE 754) implementation of these 
numerical schemes, see for instance \cite{SAS01,ESOP02,Dag03,SAS06,FMICS07}. 
But it is already a difficult problem
for these numerical programs, to determine run-time errors (RTEs) statically,
just because the bounds of the results of numerical computation are hard to find. 
These bounds are not only hard to find for floating-point arithmetic, but also for real 
arithmetic, which is the first critical step towards solving the complete problem.

{\em In the sequel, 
we are describing a precise 
abstract domain of affine forms for bounding real number calculations, 
in the sense of abstract
interpretation \cite{Cousot}.}

We give in Section \ref{numschemes} a class of simple programs that
are pervasive in the field of numerical computing: {\em linear 
recursive filters
of any order}.
They are encountered generally in signal processing and control programs,
but encompass also all linear recurrence schemes that can be found in 
simulation programs. We will study {\em extensively} the behavior
of our abstract domain on such programs.

Of course, we are also interested in non-linear schemes, and already studied some coming for example 
from the solution of a conjugate gradient algorithm, or algorithms for estimating transcendental 
functions.
And, as we will see in the description of our abstract semantics, one of
the interesting points of affine forms is that they behave well also for
non-linear computations, in a much more precise and natural way than
with classical polyhedra, or zones/octagons. But we have not reached
yet the point in the theory where we can state as precise statements as
for the analysis of {\em linear} dynamical systems, although strong 
practical evidence show that our method gives very good results as well
for some non-linear dynamical systems (see for instance \cite{SAS06, FMICS07}
for some examples that were already solved with a much coarser abstract
domain than the one of this article).

\subsection{A class of numerical schemes of interest}

\label{numschemes}

Let us consider the following class of
program, that we will study in depth with our abstract domain in Section \ref{linear}:

\begin{verbatim}
filter(float x[n+1]) {
  real e[n+1];
  e[*] = input(m,M); [1]
  while (true) { [2]
    e[n+1] = input(m,M);
    x[n+1] = a[1]*x[1]+a[2]*x[2]+...+a[n]*x[n]
     + b[1]*e[1]+b[2]*e[2]+...+b[n+1]*e[n+1]; [3]
    x[n] = x[n+1]; ... x[1] = x[2]; [4] } }
\end{verbatim}

In the program above, \texttt{a[]} is an array of $n$ constants $a_i$,
$i=1,\ldots,n$ (indices of arrays start at 1),
\texttt{b[]} is an array of $n+1$ constants $b_i$, $i=1,\ldots, n+1$.
\texttt{M} and \texttt{m} are parameters, giving the bounds $M$ and $m$
of the successive inputs over time.
For purposes of simplicity, as was discussed in the introduction, types
of variables are {\em real} number types. We use the notation 
\texttt{e[*]=input(m,M);} to denote the sequence of $n+1$ input assignments 
between $m$ and $M$. 
At iterate $k$ of the filter, variable
\texttt{x[i]} represents the value $x_{k+i}$ of the output. Our main interest here
is in the {\em invariant} at control point [2] (control points are
indicated as numbers within square brackets).
 
The program \texttt{filter} describes the infinite iteration of a filter of order $n$ with
coefficients $a_1,\ldots,a_n$, $b_1,\ldots,b_{n+1}$ 
and a new input $e$ between $m$ and $M$ at each iteration :
\begin{equation}
\label{LRF}
x_{k+n+1} = \sum_{i=1}^{n} a_i x_{k+i} + \sum_{j=1}^{n+1} b_j e_{k+j},
\end{equation}
starting with initial conditions $x_1,\ldots, x_n, x_n$. 

We rewrite (\ref{LRF}) as:
\begin{equation}
\label{matrixeq}
X_{k+1}=A X_k + B E_{k+1},
\end{equation}
with $$
\begin{array}{cc}
X_k=\left(\begin{array}{c} 
x_{k} \\
x_{k+1} \\
\ldots \\
x_{k+n} \\
\end{array}\right)
& 
E_{k+1}=\left(\begin{array}{c}
e_{k} \\
e_{k+1} \\
\ldots \\
e_{k+n+1} \\
\end{array}\right)
\end{array}
$$
$$\begin{array}{cc}
A=\left(\begin{array}{cccc}
0 & 1 & \ldots & 0 \\
0 & 0 & \ldots & 0 \\
\ldots \\
0 & 0 & \ldots & 1 \\
a_1 & a_2 & \ldots & a_n
\end{array}\right) &
B = \left(\begin{array}{cccc}
0 & 1 & \ldots & 0 \\
0 & 0 & \ldots & 0 \\
\ldots \\
0 & 0 & \ldots & 1 \\
b_1 & b_2 & \ldots & b_{n+1}
\end{array}\right)
\end{array}$$

%and $$A=\left(\begin{array}{ccccc}
%0 & 1 & 0 & \ldots & 0 \\
%0 & 0 & 1 & \ldots & 0 \\
%\ldots \\
%0 & 0 & 0 & \ldots & 1 \\
%a_1 & a_2 & a_3 & \ldots & a_n
%\end{array}\right)
%$$
%$$B = \left(\begin{array}{ccccc}
%0 & 1 & 0 & \ldots & 0 \\
%0 & 0 & 1 & \ldots & 0 \\
%\ldots \\
%0 & 0 & 0 & \ldots & 1 \\
%b_1 & b_2 & b_3 & \ldots & b_{n+1}
%\end{array}\right)
%$$

Now, of course, (\ref{matrixeq}) has solution
$$X_k=A^{k-1} X_1 + \sum_{i=1}^k A^{k-i} B E_{i}, $$
where $X_1$ is the vector of initial conditions of this linear
dynamical system.
If $A$ has eigenvalues (roots of $x^n-\sum_{i=1}^{n}a_{i} x^{i-1}$)
of module strictly less than 1, then the term $A^{k-1} X_1$ will tend to
zero when $k$ tends toward infinity, whereas the partial sums
$\sum_{i=1}^k A^{k-i} B E_{i}$ will tend towards a finite value
(obtained as a convergent infinite series). 

\begin{exem}
\label{transf2}
Consider the following filter of order 2 (see \cite{SAS07}):
$$
x_i = 0.7 e_i - 1.3 e_{i-1} + 1.1 e_{i-2} + 1.4 x_{i-1} - 0.7 x_{i-2},
$$
where $e_i$ are independent inputs between 0 and 1. A typical run
of this algorithm with $e_i=\frac{1}{2}$ (for all $i$)
and $x_0=0$, $x_1=0$ converges
towards 0.8333... stays positive, and reaches at most 1.1649 
(its dynamics is shown in Figure \ref{dynamics1}).
\begin{center}
\begin{figure}
\begin{minipage}{4cm}
\begin{center}
\epsfig{file=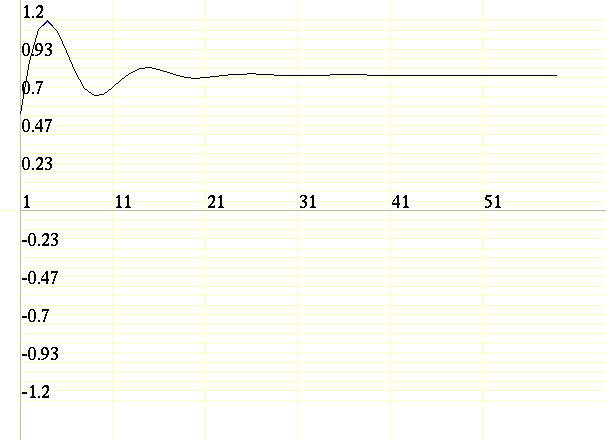,width=4cm,clip=}
\caption{A run of the filter example.}
\label{dynamics1}
\end{center}
\end{minipage}
\begin{minipage}{4cm}
\begin{center}
\epsfig{file=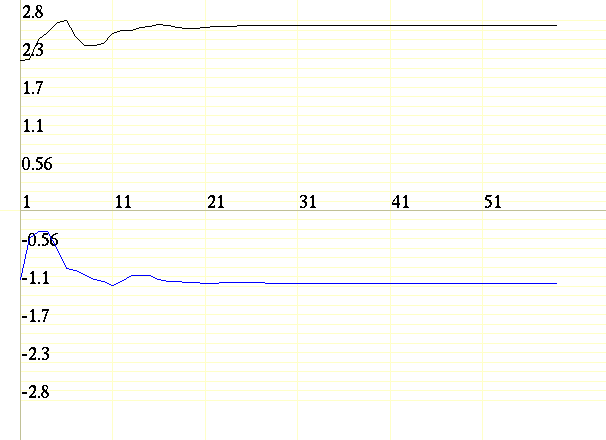,width=4cm,clip=}
\caption{Min and max over the iterations.}
\label{dynamics2}
\end{center}
\end{minipage}
\end{figure}
\end{center}
\end{exem}

\subsection{Classical affine arithmetic}

\label{affine}

An {\em affine form}
is a formal series over a set of {\em noise symbols} $\varepsilon_i$
\[ \hat x = \alpha^x_0 + \sum_{i=1}^{\infty} \alpha^x_i \varepsilon_i,\]
% + \ldots + \alpha^x_n \varepsilon_n\]
with 
$\alpha^x_i \in \R$.

Let $\RAff$ denote the set of such affine forms. 
Each noise symbol $\varepsilon_i$ stands for an independent component of the total
uncertainty on the quantity $\hat x$, its value is unknown but bounded in [-1,1]; 
the corresponding coefficient $\alpha^x_i$ is a known real value, which gives 
the magnitude of that component. The idea is that the same noise symbol can be 
shared by several quantities, indicating correlations among them. These noise 
symbols can be used not only for modeling uncertainty in data or parameters,
but also uncertainty coming from computation.

When the cardinal of the set $\{\alpha^x_i \neq 0\}$ is finite, such affine
forms correspond to the affine forms introduced first in \cite{Stolfi} and
defined for static analysis in \cite{SAS06} by the authors. 

The concretization of a set of affine forms sharing noise symbols is
a center-symmetric polytope, which center is given by the $\alpha_0$ vector 
of the affine forms. 
For example, the concretization of
\begin{eqnarray*}
 \hat x &=& 20-4 \varepsilon_1 +2 \varepsilon_3 + 3 \varepsilon_4 \\
 \hat y &=& 10-2\varepsilon_1 + \varepsilon_2 - \varepsilon_4
\end{eqnarray*}
is given in Figure \ref{fig_concret}.
\begin{figure}
\begin{center}
\epsfig{file=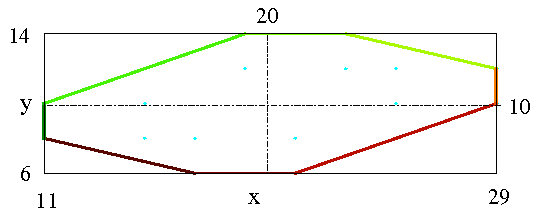,width=8cm,clip=}
\caption{Concretization is a center-symmetric polytope}
\label{fig_concret}
\end{center}
\end{figure}

$\RAff$ is a $\R$-vector space with the operations $+$ and $\times$:
$$\begin{array}{ll}
\left( \alpha^x_0 + \sum_{i=1}^{\infty} \alpha^x_i \varepsilon_i
\right) + & \left( \alpha^y_0 + \sum_{i=1}^{\infty} \alpha^y_i \varepsilon_i
\right)= \\
& (\alpha^x_0+\alpha^y_0)+\sum_{i=1}^{\infty} (\alpha^x_i+\alpha^y_i) \varepsilon_i
\end{array}$$
$$\begin{array}{ll}
\lambda \times \left( \alpha^x_0 + \sum_{i=1}^{\infty} \alpha^x_i \varepsilon_i
\right) = & \lambda \alpha^x_0 + \sum_{i=1}^{\infty} \lambda \alpha^x_i \varepsilon_i
\end{array}
$$

A sub-vector space $\RAff_1$ of $\RAff$ 
can be classically endowed with a Banach space structure, using
the ${\ell}_1$ norm
$$\norm{\hat{x}}_1=\sum_{i=0}^{\infty} |\alpha^x_i|,
$$
for the elements $\hat{x}$ such that the above sum is finite.

We define the projection 
\[ \pi_L(\hat{x}) = \sum_{i=1}^{\infty} \alpha^x_i \varepsilon_i, \]
and the associated semi-norm
\[ \normL{\hat{x}} = \norm{\pi_L(\hat{x})}_1. \]
% two projections: 
%$$\begin{array}{rcl}
%\pi_C(\hat{x}) & = & \alpha^x_0 \\
%\pi_L(\hat{x}) & = & \sum_{i=1}^{\infty} \alpha^x_i \varepsilon_i \\
%\end{array}
%$$
%and two seminorms:
%$$\begin{array}{rcl}
%\normC{\hat{x}} & = & \norm{\pi_C(\hat{x})}_1\\
%\normL{\hat{x}} & = & \norm{\pi_L(\hat{x})}_1\\
%\end{array}
%$$
We also define $\normA{\hat{x}} = |\alpha_0^x| + \normL{\hat{x}}.$
%Of course, here, $\norm{\hat{x}}_1 = \normA{\hat{x}}$.

\vskip .2cm

Let us come back now to the linear recursive filters of Section \ref{numschemes}.
For simplicity's sake, suppose $e_{k+n+1}=\varepsilon_{k+n+1}$, so the $e_{k+n+1}$ are
independent inputs between 0 and 1. As we will see later on, but as
should already be obvious because of the definitions of sum and product
of affine forms by a scalar, the semantics, using affine forms, of the completely unfolded
filter program exactly gives, at unfolding $k$:
$$\hat{X}_k=A^{k-1} \hat{X}_1 + \sum_{i=1}^k A^{k-i} B \hat{E}_i$$
where $$\hat{E}_i=\left(\begin{array}{c}
\varepsilon_{i-n} \\
\varepsilon_{i-n+1} \\
\ldots \\
\varepsilon_i 
\end{array}\right)$$ and $\hat{X}_i$ are the obvious affine forms vector
counterparts of $X_i$.

This means that in the case $A$ has complex roots of module strictly
less than 1, the affine forms (with a finite number of $\varepsilon_i$)
giving the semantics of values at unfolding $k$, converge in the
$\ell_1$ sense to an affine form with infinitely many noise symbols.

\begin{exem}
Consider again the filter of order 2 of Example \ref{transf2}.
We supposed that the successive inputs $e_i$ 
are independent inputs between 0 and 1, so that 
we can write $\hat{e}_i = \frac{1}{2}+\frac{1}{2} \varepsilon_{i+1}$
(with different noise symbols at each iterate), and $x_0= x_1 = 0$. 
For instance, if we compute the affine form after 99 unfolds, we
find:
$$\begin{array}{rcl}
\hat{x}_{99} & = & 0.83 +7.81 e^{-9} \varepsilon_1 
- 2.1 e^{-8} \varepsilon_2
 - 1.58 e^{-8} \varepsilon_3 +\ldots\\
& & -0.16 \varepsilon_{99}+0.35 \varepsilon_{100}
\end{array}$$
 whose concretization gives an exact (under the assumption that the coefficients of the affine form 
are computed with arbitrary precision) enclosure of $x_{99}$~:
$$x_{99}
\sqsubseteq
[-1.0907188500,2.7573854753].$$
The limit affine form has a concretization converging towards (see
Figure \ref{dynamics2}):
$x_{\infty} = [-1.09071884989...,2.75738551656...]$.
\end{exem}

Unfortunately, if asymptotically (i.e. when $k$ is large enough), 
the concretization of the affine
forms $\hat{X}_k$ converges to a good estimate of the values that
program variable $x$ can take (meaning, after a large number
of iteration $k$), this form is in no way an invariant of the loop, and
{\em does not account} for all values that this variable can take along
the loops.

\begin{exem}
\label{transf22}
This can be seen for the particular filter of order 2 of Example
\ref{transf2}. In Figure \ref{dynamics2}, the reader with good eyes can
spot that around iterations 8-10, the concretization of $x$ can
go above 2.75738551656..., which is the {\em asymptotic} supremum. 
Actually, the sup value is 2.82431841..., reached at iteration 8,
whereas the infimum is -1.12124069..., reached at iteration 13.
\end{exem}

{\bf The aim of this paper is to describe a suitable extension of these
affine forms that can account for such invariants.}

\section{An ordered Banach space of generalized affine forms}

\label{poBanach}

%\subsection{The abstract domain of affine forms}

We now extend our Banach space of affine forms in order to {\em represent}
unions of affine forms, as a {\em perturbed} affine form. We
consider $\RA_1=\RAff_1 \oplus \R$ and write these new affine forms
as:
\[ \hat x = \alpha^x_0 + \sum_{i=1}^{\infty} \alpha^x_i \varepsilon_i
+\beta^x \varepsilon_U\]
Norms $\ell_1$ are extended over this new domain in an obvious manner.
%We consider now also the projection
%$$\pi_U(\hat{x}) = \beta^x$$
%and the seminorm $$\normU{\hat{x}} = \norm{\pi_U(\hat{x})}_1$$
We now have $\norm{\hat{x}}_1 =  |\alpha_0^x| + \normL{\hat{x}} + |\beta^x|.$

\paragraph{Remark: }
In the rest of this section, unless otherwise stated, 
we restrict the study to elements in 
the {\em cone} $\RA_+$ of $\RA_1$ whose elements $\hat{x}$ have a positive $\beta^x$.
We will sketch some ways to extend the results obtained, and their meaning, for
all of $\RA_1$, in Sections \ref{meetoperation} and \ref{betaneg}.

\vskip .2cm

We first give concrete semantics to these generalized affine forms in 
Section \ref{concsem}, then we give in Section \ref{arithexpr} the abstract transfer functions
for arithmetic expressions. 
The counterpart of the inclusion ordering, the {\em continuation}
ordering, is defined in Section \ref{continuation}. The main
technical ingredient that will allow us to find effective join and
meet operations in Section \ref{quasilattice} is the equivalence between
this seemingly intractable ordering, and an ordering with a much simpler
definition,
the {\em perturbation} ordering, see Theorem \ref{maintheo1}.
Finally we prove in Section \ref{Banach} that these
generalized affine forms also have the structure
of {\em an ordered} Banach space. This will be useful for proving convergence
results with our iteration schemes in Section \ref{linear}.

\subsection{Concrete semantics of expressions and concretization function}
\label{concsem}

\begin{defi}
We define the concretization function $\gamma: \RA_+ \rightarrow \I$ in intervals as
follows, for $\hat{x}=\alpha^x_0+\sum_{i=1}^{\infty} \alpha^x_i \varepsilon_i + \beta^x \varepsilon_U$:
$$\gamma\left(\hat{x}\right)=
[\alpha^x_0 -\normL{\hat{x}}-\beta^x, 
\alpha^x_0 + \normL{\hat{x}} + \beta^x]
$$
\end{defi}
%Choosing the $\ell_1$ norm, this definition amounts to writing:
%$$\gamma\left(\hat{x}\right)=
%[\alpha^x_0 -\sum_{i=1}^{\infty} \mid \alpha^x_i \mid - \alpha^x_U, \\
%\alpha^x_0 + \sum_{i=1}^{\infty} \mid \alpha^x_i \mid + \alpha^x_U
%]
%$$
whose lower (respectively upper) bound corresponds to the infimum
(respectively supremum) of the affine form $\hat{x}$ seen as a function
from $\varepsilon_i \in [-1,1]$ to $\R$. 
%The only non obvious part
%in the interpretation of the formula above is the absence of an absolute
%value around $\beta^x$: as we will demonstrate later on, this
%coefficient will be positive when denoting the result of a union, negative
%when denoting the result of an intersection. 

Let $\var$ be the set of program variables. An {\em abstract environment}
is a function $\sigma: \var \rightarrow \RA_+$. We write $\hat{\var}$
for the set of such abstract environments. The fact that the affine
forms representing the variables share some common noise symbols can
be expressed in the {\em joint concretization}, also denoted by $\gamma$,
of $\sigma$ (we suppose here that $\var$ is finite and equal
to $\{x_1,\ldots,x_k\}$):
$$\gamma(\sigma)=\left\{
\begin{array}{l}
(\hat x_1,\ldots,\hat x_k) \ \mid \ \exists t_1,\ldots,
t_n, \ldots \in [-1,1], \\ \exists u_{x_1},\ldots,u_{x_k} \in [-1,1], \\ 
 \left\{\begin{array}{rcl}
\hat x_1 & = & \alpha^{x_1}_0 + \sum_{i=1}^{\infty} \alpha^{x_1}_i t_i + \beta^{x_1} u_{x_1} \\
\ldots \\
\hat x_k & = & \alpha^{x_k}_0 + \sum_{i=1}^{\infty} \alpha^{x_k}_i t_i + \beta^{x_k} u_{x_k} \\
\end{array}\right.\end{array}\right\}$$

\begin{exem}
\label{firstjointaffine}
Consider 
$$\begin{array}{rcl}
\hat{x} & = & 1+\varepsilon_1+\varepsilon_2+\varepsilon_U \\
\hat{y} & = & 2-\varepsilon_1+2\varepsilon_2+\epsilon_U
\end{array}$$
Their joint concretization is the inner polyhedron of Figure \ref{inclusionaffine}.
\end{exem}

Seeing affine forms $\hat{x}$ as functions of $\varepsilon_i$ and $\varepsilon_U$, we
define the {\em concrete} semantics of arithmetic operations $+$, $-$ and $\times$
on affine forms, with values in the set of subsets of $\R^k$, as follows. 
We note $\hat{x}(t_1,\ldots,t_n,\ldots,u_x)$, for
$t_1,\ldots,t_n,\ldots,u_x \in [-1,1]$, the application $\hat{x}$, seen as an affine 
function
of $\varepsilon_1,\ldots,\varepsilon_n,\ldots, \varepsilon_u$,
to $t_1,\ldots,t_n,\ldots,u_x$.

The concrete semantics of $\hat{x}+\hat{y}$ in $\R$ is now

\noindent
$\begin{array}{rcl}
\im \hat{x}+\hat{y} & = & \left\{\hat{x}(t_1,\ldots,t_n,\ldots,u_x)+\hat{y}(t_1,\ldots,t_n,\ldots,u_y) \right.\\
& & \left.\mid t_1,\ldots,t_n,\ldots,u_x,u_y \in [-1,1]\right\}
\end{array}$

For $-\hat{x}$, it is

\noindent
$\begin{array}{rcl}
\im -\hat{x} & = & \left\{-\hat{x}(t_1,\ldots,t_n,\ldots,u_x) \right.\\
& & \left.\mid t_1,\ldots,t_n,\ldots,u_x \in
[-1,1]\right\} 
\end{array}$

Finally, the concrete semantics of $\hat{x}\times \hat{y}$ is

\noindent
$\begin{array}{rcl}
\im \hat{x}\times\hat{y} & = & \left\{\hat{x}(t_1,\ldots,t_n,\ldots,u_x)\times \hat{y}(t_1,\ldots,t_n,\ldots,u_y) \right.\\
& & \left.\mid t_1,\ldots,t_n,\ldots,u_x,u_y \in [-1,1]\right\} 
\end{array}$

We are going to give an abstract semantics for $+$, $-$ and $\times$ in 
next section.

\subsection{Abstract interpretation of simple arithmetic expressions}

\label{arithexpr}

Let $Expr$ be the set of polynomial expressions\footnote{Nothing prevents
us from defining abstract transfer functions for other operations, such
as $\sqrt{.}$, $sin$, $acos$ etc. as affine forms are naturally
Taylor forms. This is not described in this article, for lack of space.}, i.e. expressions 
built inductively from the set of program variables $\var$, real number
%affine 
constants, %with zero $\pi_U$ projection (i.e. any constant expression of the form $\sum_{i=1}^{\infty} \alpha_i 
%\varepsilon_i$), 
and operations +, - and $\times$.
We now define the respective operators $\hat{+}$, $\hat{-}$ and
$\hat{\times}$ (extending the ones of Section \ref{affine})
%where :
%$$\begin{array}{rcl}
%\hat{x} & = & \alpha^x_0 + \alpha^x_1 \varepsilon_1 + \ldots + \alpha^x_n \varepsilon_n + \beta^x \varepsilon_{U} \\
%\hat{y} & = & \alpha^y_0 + \alpha^y_1 \varepsilon_1 + \ldots + \alpha^y_n \varepsilon_n + \beta^y \varepsilon_{U}
%\end{array}$$
%\footnote{The sign $+$ in $+\beta^y 
%\varepsilon_U$ is certainly not a typo!}

\begin{defi}
$$\begin{array}{rcl}
\hat{x}\hat{+}\hat{y} & = & \alpha^x_0+\alpha^y_0 + 
\sum_{i=1}^{\infty}\left(\alpha^x_i+\alpha^y_i\right)\varepsilon_i 
+\left(\beta^x+\beta^y\right) \varepsilon_{U} \\
\hat{-}\hat{x} & = & 
-\alpha^x_0 - \sum_{i=1}^{\infty} \alpha^x_i \varepsilon_i + \beta^x \varepsilon_{U}
\end{array}
$$
(note that the sign $+$ in $+\beta^y \varepsilon_U$ is certainly not a typo).
And we define\footnote{Better abstractions are available, but make the
presentation more complex, this is left for the full version of this article.} for affine forms $\hat{x}$ and $\hat{y}$ having a finite
number of non-zero $\alpha_i$ coefficients (we call them affine forms with finite support)
\begin{eqnarray*}
\hat{x}\hat{\times}\hat{y}& = &
\alpha^x_0 \alpha^y_0 + 
\sum_{i=1}^{\infty} \left(\alpha^x_0 \alpha^y_i + \alpha^x_i \alpha^y_0\right) 
\varepsilon_i +\sum_{i,k=1}^{\infty} \mid \alpha^x_i \alpha^y_k \mid
 \varepsilon_{f} \\
&&+ \sum_{j=0}^{\infty} \left(\mid \alpha^x_j \mid \beta^y 
+\beta^x \mid \alpha^y_j \mid + \beta^x \beta^y
\right)\varepsilon_{U},
\end{eqnarray*}
where $\varepsilon_f$ is a symbol which is unused in $\hat{x}$
nor in $\hat{y}$ (``fresh
noise symbol'').
\end{defi}

\begin{lemm}
\label{base}
We have the following correctness result on the abstract semantics of expressions:
$$\begin{array}{rclcl}
\im \hat{x}+\hat{y} & \subseteq & \gamma(\hat{x} \hat{+} \hat{y}) 
& \subseteq & \gamma(\hat{x})+\gamma(\hat{y}) \\
\im -\hat{x} & \subseteq & \gamma(\hat{-}\hat{x}) & \subseteq &
-\gamma(\hat{y}) \\
\im \hat{x}\times \hat{y} & \subseteq & \gamma(\hat{x}\hat{\times} \hat{y})
%\gamma(\hat{x}\hat{+}\hat{y}) & \subseteq & \gamma(\hat x)+\gamma(\hat y) \\
%\gamma(\hat{-}\hat{x}) & \subseteq & -\gamma(\hat{x}) \\
%\gamma(\hat{x}\hat{\times}\hat{y}) & \subseteq & \gamma(\hat{x})\times
%\gamma(\hat{y})\\
\end{array}
$$
where $+$ and $-$ on the right hand side of inequalities above are the
corresponding operations on intervals, and the last inclusion holds only for 
affine forms with finite support.
\end{lemm}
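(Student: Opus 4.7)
The plan is to prove each inclusion by a direct calculation, using the fact that elements of $\im$ are by definition values of the affine forms at some point $(t_1,\ldots,t_n,\ldots,u_x,u_y)\in[-1,1]^{\infty}$, and comparing these with the interval produced by $\gamma$. Throughout I rely crucially on the cone hypothesis $\beta^x,\beta^y\ge 0$, so that $|\beta^x u_x|\le\beta^x$ and $|\beta^y u_y|\le\beta^y$ uniformly.

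For addition, writing $v=\hat{x}(t,u_x)+\hat{y}(t,u_y)$, one gets
\[ v-(\alpha^x_0+\alpha^y_0)=\sum_{i\ge 1}(\alpha^x_i+\alpha^y_i)t_i+\beta^x u_x+\beta^y u_y,\]
whose modulus is bounded by $\sum_{i\ge 1}|\alpha^x_i+\alpha^y_i|+\beta^x+\beta^y=\normL{\hat{x}\hat{+}\hat{y}}+(\beta^x+\beta^y)$, i.e.\ the half-width of $\gamma(\hat{x}\hat{+}\hat{y})$; this gives the first inclusion. The triangle inequality $\normL{\hat{x}\hat{+}\hat{y}}\le\normL{\hat{x}}+\normL{\hat{y}}$ then implies that $\gamma(\hat{x}\hat{+}\hat{y})$ is contained in the Minkowski sum of $\gamma(\hat{x})$ and $\gamma(\hat{y})$, whose centers coincide. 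The negation case is immediate: since $|u_x|\le 1$, the sign in front of $\beta^x\varepsilon_U$ is irrelevant and $\gamma(\hat{-}\hat{x})=-\gamma(\hat{x})$, so both inclusions (with equality) follow.

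The multiplication case is the main technical content. Expanding $\hat{x}(t,u_x)\cdot\hat{y}(t,u_y)$ yields $\alpha^x_0\alpha^y_0$, plus the linear term $\sum_{i\ge 1}(\alpha^x_0\alpha^y_i+\alpha^x_i\alpha^y_0)t_i$, plus four residual contributions which I group as
\[ R_1=\Bigl(\sum_{i\ge 1}\alpha^x_i t_i\Bigr)\Bigl(\sum_{j\ge 1}\alpha^y_j t_j\Bigr),\qquad R_2=\beta^y u_y\Bigl(\alpha^x_0+\sum_{i\ge 1}\alpha^x_i t_i\Bigr),\]
\[ R_3=\beta^x u_x\Bigl(\alpha^y_0+\sum_{j\ge 1}\alpha^y_j t_j\Bigr),\qquad R_4=\beta^x\beta^y u_x u_y.\]
Bounding each in modulus gives $|R_1|\le\normL{\hat{x}}\normL{\hat{y}}=\sum_{i,k\ge 1}|\alpha^x_i\alpha^y_k|$, which is exactly the coefficient of the fresh noise symbol $\varepsilon_f$; and $|R_2|\le\beta^y\normA{\hat{x}}$, $|R_3|\le\beta^x\normA{\hat{y}}$, $|R_4|\le\beta^x\beta^y$, whose sum equals $\sum_{j\ge 0}(|\alpha^x_j|\beta^y+\beta^x|\alpha^y_j|)+\beta^x\beta^y$, the coefficient of $\varepsilon_U$ in $\hat{x}\hat{\times}\hat{y}$. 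Since the linear part matches exactly, the product lies in $\gamma(\hat{x}\hat{\times}\hat{y})$.

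The main obstacle is the bookkeeping in the multiplication. The subtlety is that $R_1$ genuinely needs a \emph{fresh} noise symbol: it is a quadratic function of the shared $t_i$ whose sign interacts in a complicated way with the linear parts, so it cannot reuse any $\varepsilon_i$ without breaking the correctness argument for subsequent operations. By contrast, $R_2$, $R_3$, $R_4$ each involve an independent sign ($u_x$ or $u_y$) and are therefore safely lumped together into the single perturbation symbol $\varepsilon_U$. Finite support is used only to guarantee that $\normL{\hat{x}}\normL{\hat{y}}$ and the sums $\sum_j |\alpha^x_j|\beta^y$, $\sum_j \beta^x|\alpha^y_j|$ are finite, so that the product is a well-defined element of $\RA_+$.
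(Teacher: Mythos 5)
Your proof is correct and is essentially the argument the paper has in mind: the paper's one-line sketch simply defers to the classical affine-arithmetic correctness result of Stolfi, ``easily extended to $\varepsilon_U$ symbols and infinite series,'' and your direct term-by-term verification (the half-width comparison for $\hat{+}$ and $\hat{-}$, and the decomposition into $R_1,\dots,R_4$ with $|R_1|\le\normL{\hat x}\,\normL{\hat y}$ matching the $\varepsilon_f$ coefficient and $|R_2|+|R_3|+|R_4|$ matching the $\varepsilon_U$ coefficient for $\hat\times$) supplies exactly the details that sketch omits. The only caveats are cosmetic: the $-\gamma(\hat y)$ in the statement is evidently a typo for $-\gamma(\hat x)$, and your reading of the product's $\varepsilon_U$ coefficient with $\beta^x\beta^y$ outside the sum over $j$ is the only sensible interpretation of the paper's formula.
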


\begin{sketch}
This is mostly the similar classical result in affine arithmetic \cite{Stolfi}, and easily
extended to $\varepsilon_U$ symbols and infinite series (convergent in the $\ell_1$ sense).
$\Box$
\end{sketch}

\subsection{The continuation and the perturbation ordering}

\label{continuation}

%\item Definitions of the two orderings - examples
%\item Continuation ordering implies perturbation ordering - example;
%perturbation ordering 
%\item The perturbation ordering is closed in the Banach*Banach (also evaluation
%ordering - not used anywhere??) - interessant car il y a de beaux resultats
%generaux sur les espaces partiellement ordonnes (voir Nachbin, genre
%une base de la topologie est donnee par la topologie intervalle... a verifier
%et voir si ca sert ou si ca a une interpretation sympathique)
%\end{itemize}

The correctness of the semantics of arithmetic expressions defined in 
Section \ref{arithexpr}, and more generally of the semantics of a
real language (Section \ref{imp}) relies on an information 
ordering, which we call the {\em continuation ordering}, Definition
\ref{continuationorder}. Unfortunately,
its definition makes it difficult to use, and we define an a priori weaker 
ordering, that we call {\em perturbation ordering}, Definition and Lemma
\ref{perturborder}, that will be
easily decidable, and shown equivalent to the continuation ordering 
(Proposition \ref{equivinfoperturb}). The {\em perturbation ordering} has
minimal upper bounds, but not least upper bounds. A simple construction
will allow us to define in Section \ref{riesz} a lattice with a
slightly stronger {\em computational} ordering, based on the
{\em perturbation ordering}. 

\begin{defi} [continuation order]
\label{continuationorder}
Let $\sigma_1$ and $\sigma_2$ be two abstract environments. We say that
$\sigma_1 \preceq \sigma_2$ if and only if for all $e \in Expr$
$$\gamma \semb e \seme \sigma_1 \subseteq 
\gamma \semb e \seme \sigma_2$$

We naturally say that
$\hat{x} \preceq \hat{y}$ if and only if 
$$\gamma \semb e \seme \sigma[u \leftarrow \hat{x}] \subseteq 
\gamma \semb e \seme \sigma[u \leftarrow \hat{y}]$$
for all $e \in Expr$ and all $\sigma\in \hat{Var}$ (and for some $u \in \var$).
\end{defi}

\begin{deflem} [perturbation order]
\label{perturborder}
We define the following binary relation $\leq$ on elements of $\RA_1$
$$x \leq y \Leftrightarrow \normA{x-y}\leq \beta^y - \beta^x.$$
Then $\leq$ is a partial order on $\RA_1$.

We extend this partial order {\em componentwise} to abstract environments
as follows: for all $\sigma_1, \sigma_2: \var \rightarrow \RA_1$, 
$$\sigma_1 \leq \sigma_2 \Leftrightarrow \forall x \in \var, \sigma_1(x) \leq
\sigma_2(x)$$

\end{deflem}

\begin{sketch}
Reflexivity and transitivity of $\leq$ are trivial. For antisymmetry, 
suppose $\hat{x}\leq \hat{y}$ and $\hat{y} \leq \hat{x}$, then we have
$$\begin{array}{rcl}
\normA{\hat{x}-\hat{y}} & \leq & \beta^y - \beta^x \\
\normA{\hat{x}-\hat{y}} & \leq & \beta^x - \beta^y 
\end{array}$$
This implies that both $\beta^y-\beta^x$ and $\beta^x - \beta^y$ are
positive, hence necessarily zero. Hence also $\normA{\hat{x}-\hat{y}}=0$ meaning
$\pi_A{\hat{x}}=\pi_A{\hat{y}}$. Overall: $\hat{x}=\hat{y}$. 
$\Box$
\end{sketch}

Now, we prove intermediary results in order to prove equivalence
between the two orders above. Half of this equivalence is easy, see
Lemma \ref{predleq}. The other half is a consequence of Lemma \ref{conc}
and of Lemma \ref{increasing}. Theorem \ref{maintheo1} is the same as 
Proposition \ref{equivinfoperturb}, 
not just for individual affine forms, but for
all abstract environments.

\begin{lemm}
\label{predleq}
$\hat{x} \preceq \hat{y} \Rightarrow \hat{x} \leq \hat{y}$
\end{lemm}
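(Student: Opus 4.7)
The plan is to derive the perturbation-order inequality $\normA{\hat{x}-\hat{y}}\leq\beta^y-\beta^x$ from the continuation-order hypothesis $\hat{x}\preceq\hat{y}$ by instantiating the latter on a single carefully chosen expression and environment. The core difficulty is that applying the hypothesis to the trivial expression $e:=u$ yields only the interval inclusion $\gamma(\hat{x})\subseteq\gamma(\hat{y})$, which bounds the quantity $\pm(\alpha_0^x-\alpha_0^y)+(\normL{\hat{x}}-\normL{\hat{y}})$ but cannot separate out the $\ell_1$-distance $\normL{\hat{x}-\hat{y}}$ between the linear parts. Some device must force the linear parts on one side of the inclusion to cancel completely.

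The idea is to use a second variable $v\in\var$ distinct from $u$, the test expression $e:=u-v$, and an abstract environment $\sigma$ with $\sigma(v):=\hat{y}$ (the values of $\sigma$ on the other variables being irrelevant). The hypothesis $\hat{x}\preceq\hat{y}$ then gives
\begin{equation*}
\gamma\bigl(\hat{x}\,\hat{+}\,\hat{-}\hat{y}\bigr)\;\subseteq\;\gamma\bigl(\hat{y}\,\hat{+}\,\hat{-}\hat{y}\bigr).
\end{equation*}
Unfolding both sides using the abstract $\hat{+}$ and $\hat{-}$ of Section \ref{arithexpr}, the left-hand form has centre $\alpha_0^x-\alpha_0^y$, linear part $\sum_i(\alpha_i^x-\alpha_i^y)\varepsilon_i$, and perturbation coefficient $\beta^x+\beta^y$; the right-hand form collapses to $2\beta^y\,\varepsilon_U$, because the centres and linear parts cancel while the two positive $\beta^y$ contributions add (recall that $\hat{-}$ keeps the $\varepsilon_U$-coefficient positive). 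Applying the definition of $\gamma$, the inclusion becomes
\begin{equation*}
\bigl[(\alpha_0^x-\alpha_0^y)\pm(\normL{\hat{x}-\hat{y}}+\beta^x+\beta^y)\bigr]\;\subseteq\;[-2\beta^y,\,2\beta^y].
\end{equation*}

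Reading off the two endpoint inequalities yields $\pm(\alpha_0^x-\alpha_0^y)+\normL{\hat{x}-\hat{y}}\leq\beta^y-\beta^x$, which combine into $|\alpha_0^x-\alpha_0^y|+\normL{\hat{x}-\hat{y}}\leq\beta^y-\beta^x$, i.e. $\normA{\hat{x}-\hat{y}}\leq\beta^y-\beta^x$. This is exactly $\hat{x}\leq\hat{y}$, and the extension to abstract environments is immediate by componentwise application. The only mild background assumptions are $|\var|\geq 2$ (otherwise one extends $\var$ by an auxiliary variable) and that $\hat{y}\in\RA_+$ is a legitimate value for $\sigma(v)$, which holds by hypothesis. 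The single conceptual step — substituting $\hat{y}$ into an auxiliary slot so that the right-hand side degenerates to a pure $\varepsilon_U$-interval — is the whole non-routine content of the proof.
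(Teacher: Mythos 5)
Your proof is correct and is essentially the paper's own argument: both instantiate the continuation order on the test expression $e=u-v$ with $v$ bound to (a copy of the data of) $\hat{y}$ so that the linear parts cancel, and then read off the two endpoint inequalities. The only difference is that the paper sets $\sigma(v)=\pi_L(\hat{y})+\alpha_0^y$ (dropping the $\varepsilon_U$ term, so the right-hand side is $[-\beta^y,\beta^y]$) whereas you keep the full $\hat{y}$ (right-hand side $[-2\beta^y,2\beta^y]$); the extra $\beta^y$ appears on both sides and cancels, so the conclusion is identical.
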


\begin{proof}
Given $\hat{x}$ and $\hat{y}$ in $\RA_1$, consider the expression $e=u-v$ and
the environment $\sigma$ such that $\sigma(v)=\pi_L(\hat{y})+\alpha_0^y$. 
We have $\gamma \semb e \seme [u\leftarrow \hat{x}] 
\subseteq \gamma \semb e
\seme \sigma [u \leftarrow \hat{y}]$, which means:
$$
\left[\alpha^x_0-\alpha^y_0-\normL{x-y}-\beta^x,
\alpha^x_0-\alpha^y_0+\normL{x-y}+\beta^x\right]$$
$$
\subseteq 
[-\beta^y,\beta^y]
$$
Thus we have 
\begin{eqnarray}\label{first}
\alpha^x_0-\alpha^y_0 & \leq & -\normL{x-y}+\beta^y-\beta^x, \\ 
\alpha^x_0-\alpha^y_0 & \geq & \normL{x-y}+\beta^x-\beta^y. \label{last}
\end{eqnarray}
Inequality (\ref{last}) is equivalent to
$$\begin{array}{rcl}
\alpha^y_0-\alpha^x_0 & \leq & -\normL{x-y}+\beta^y-\beta^x, \\
\end{array}$$
hence together with  inequality (\ref{first})
\[ |\alpha^x_0-\alpha^y_0| \leq -\normL{x-y}+\beta^y-\beta^x,  \]
this exactly translates into $\hat x \leq \hat y $.
$\Box$
\end{proof}

\begin{lemm}
\label{conc}
For all $\hat{x}, \hat{y} \in \RA_1$, $\hat{x} \leq \hat{y}$ implies $\gamma(\hat{x}) \subseteq
\gamma(\hat{y})$.
\end{lemm}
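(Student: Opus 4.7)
The plan is to unpack the desired inclusion into two scalar inequalities on the interval endpoints, and then derive each from the hypothesis combined with the ordinary triangle inequality for the seminorm $\normL{\cdot}$.

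First, I write out what $\gamma(\hat{x}) \subseteq \gamma(\hat{y})$ says using the formula from Section~\ref{concsem}: it is equivalent to the pair of inequalities
\begin{eqnarray*}
\alpha^y_0 - \normL{\hat{y}} - \beta^y & \leq & \alpha^x_0 - \normL{\hat{x}} - \beta^x, \\
\alpha^x_0 + \normL{\hat{x}} + \beta^x & \leq & \alpha^y_0 + \normL{\hat{y}} + \beta^y.
\end{eqnarray*}
Next, I unpack the hypothesis. By definition of $\normA{\cdot} = |\alpha_0| + \normL{\cdot}$, the condition $\hat{x} \leq \hat{y}$ reads
\[ |\alpha^x_0 - \alpha^y_0| + \normL{\hat{x}-\hat{y}} \leq \beta^y - \beta^x. \]
In particular this gives me control over both summands simultaneously, which is exactly what I need to bound both endpoints at once.

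The last ingredient is the triangle inequality for the seminorm, $\normL{\hat{x}} - \normL{\hat{y}} \leq \normL{\hat{x}-\hat{y}}$ and its symmetric counterpart. Combining with the hypothesis, for the upper endpoint I get
\[ (\alpha^x_0 - \alpha^y_0) + (\normL{\hat{x}} - \normL{\hat{y}}) \leq |\alpha^x_0 - \alpha^y_0| + \normL{\hat{x}-\hat{y}} \leq \beta^y - \beta^x, \]
which on rearrangement yields the second of the two target inequalities. The first target inequality follows by the same computation with the roles of $\alpha^x_0$ and $\alpha^y_0$ swapped in the first term, still dominated by the same $|\alpha^x_0 - \alpha^y_0|$. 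This concludes the proof.

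The argument is essentially routine manipulation, so I do not anticipate any obstacle; the only thing worth noting is that the design of $\normA{\cdot}$ as the sum $|\alpha_0|+\normL{\cdot}$, rather than, say, a max, is precisely what makes a single inequality $\normA{\hat{x}-\hat{y}} \leq \beta^y - \beta^x$ strong enough to simultaneously control the shift of centers and the change in radii required by $\gamma$. $\Box$
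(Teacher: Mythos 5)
Your proof is correct and follows essentially the same route as the paper's: both unpack $\gamma$ into endpoint inequalities, apply the reverse triangle inequality for $\normL{\cdot}$, and use the hypothesis $|\alpha^x_0-\alpha^y_0|+\normL{\hat{x}-\hat{y}} \leq \beta^y-\beta^x$ to control the center shift and the radius change simultaneously. The only difference is cosmetic: the paper shows $\sup\gamma(\hat{y})-\sup\gamma(\hat{x})\geq 0$ directly, whereas you rearrange the same chain of inequalities.
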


\begin{proof}
We compute:
$$
\sup \/ \gamma(\hat{y}) - \sup \/ \gamma(\hat{x}) = \alpha_0^y - \alpha_0^x
+ \normL{\hat{y}}-\normL{\hat{x}} 
 +\beta^y-\beta^x $$
Using the triangular inequality $\normL{\hat{x}} \leq \normL{\hat{x}-\hat{y}}+\normL{\hat{y}}$, and 
 $\normA{x-y} \leq \beta^y - \beta^x$, we write:
$$\begin{array}{rcl}
\sup \/ \gamma(\hat{y}) - \sup \/ \gamma(\hat{x}) & \geq & \alpha_0^y - \alpha_0^x
- \normL{\hat{x}-\hat{y}} +\beta^y-\beta^x \\
& \geq & \alpha_0^y - \alpha_0^x +|\alpha_0^y - \alpha_0^x| \geq 0
\end{array}$$
and similarly for the $\inf$ bound of
the concretization.
$\Box$
\end{proof}

Notice that the converse of Lemma \ref{conc} is certainly not true: just
take $\hat{x} = 1 + \varepsilon_1$ and $\hat{x'} = 1 + \varepsilon_2$. 
It is easy to see that $\hat{x}$ and $\hat{x'}$ are incomparable, but
have same concretizations.

\begin{lemm}
\label{increasing}
$\hat{+}$, $\hat{-}$ and $\hat{\times}$ are increasing functions
on $(\RA_+,\leq)$.
\end{lemm}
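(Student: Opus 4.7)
The plan is to dispatch each of the three operations separately. Addition and negation follow almost formally from the definitions and the triangle inequality for $\normA{\cdot}$, while multiplication requires a genuine componentwise computation. For $\hat{+}$, the operation coincides with ordinary vector addition on every component, so $(\hat{x}_1 \hat{+} \hat{y}_1) - (\hat{x}_2 \hat{+} \hat{y}_2) = (\hat{x}_1 - \hat{x}_2) + (\hat{y}_1 - \hat{y}_2)$, and the triangle inequality together with the two hypotheses yields $\normA{\cdot}$ of the left side bounded by $(\beta^{x_2}-\beta^{x_1}) + (\beta^{y_2}-\beta^{y_1})$, which is exactly $\beta^{\hat{x}_2\hat{+}\hat{y}_2} - \beta^{\hat{x}_1\hat{+}\hat{y}_1}$. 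For $\hat{-}$, the map flips the signs of $\alpha_0$ and the $\alpha_i$'s but leaves $\beta$ untouched; since $\normA{\cdot}$ ignores the $\beta$-component, the defining inequality of $\leq$ transports unchanged.

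The real work lies in $\hat{\times}$. By transitivity of $\leq$ and the visible symmetry of the formula in $(\hat{x},\hat{y})$, it suffices to show that for fixed $\hat{x}$ of finite support, $\hat{y}_1 \leq \hat{y}_2$ implies $\hat{x}\hat{\times}\hat{y}_1 \leq \hat{x}\hat{\times}\hat{y}_2$. Working up to a choice of a common fresh noise symbol $\varepsilon_f$ for both products (which the freshness convention allows), I would expand the difference $\Delta = (\hat{x}\hat{\times}\hat{y}_2) - (\hat{x}\hat{\times}\hat{y}_1)$ component by component. The only nonlinear piece, the $\varepsilon_f$ coefficient, equals $\sum_{i,k}|\alpha_i^x|(|\alpha_k^{y_2}| - |\alpha_k^{y_1}|)$ and is controlled via $||a|-|b|| \leq |a-b|$ by $\normL{\hat{x}}\cdot\normL{\hat{y}_2-\hat{y}_1}$. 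After routine factorization of the $\alpha_0$ and $\alpha_i$ contributions, all three collapse into the compact estimate $\normA{\Delta} \leq \normA{\hat{x}}\cdot\normA{\hat{y}_2-\hat{y}_1}$, which by hypothesis is at most $\normA{\hat{x}}(\beta^{y_2}-\beta^{y_1})$.

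It then remains to compare this upper bound with $\beta^{\hat{x}\hat{\times}\hat{y}_2} - \beta^{\hat{x}\hat{\times}\hat{y}_1}$, which unfolds to $\normA{\hat{x}}(\beta^{y_2}-\beta^{y_1}) + \beta^x(\normA{\hat{y}_2}-\normA{\hat{y}_1}) + \beta^x(\beta^{y_2}-\beta^{y_1})$. The gap between the two quantities is $\beta^x\bigl[(\normA{\hat{y}_2}-\normA{\hat{y}_1}) + (\beta^{y_2}-\beta^{y_1})\bigr]$, which is nonnegative because $\beta^x \geq 0$ (this is precisely where the restriction to the cone $\RA_+$ enters) and because the reverse triangle inequality gives $\normA{\hat{y}_1}-\normA{\hat{y}_2} \leq \normA{\hat{y}_1-\hat{y}_2} \leq \beta^{y_2}-\beta^{y_1}$. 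The main obstacle is therefore not any single inequality but the mild bookkeeping of the fresh noise symbol required to put both products into a directly comparable form; once that convention is fixed, the proof reduces to two applications of the triangle inequality for $\normA{\cdot}$ and a single use of the positivity $\beta^x \geq 0$.
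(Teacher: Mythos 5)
Your proof is correct and follows essentially the same route as the paper's: the triangle inequality for $\normA{\cdot}$ handles $\hat{+}$ and $\hat{-}$ immediately, and for $\hat{\times}$ you bound $\normA{\Delta}$ by $\normA{\hat{x}}\,\normA{\hat{y}_2-\hat{y}_1}$ and then check, using $\beta^x\geq 0$ and the reverse triangle inequality, that $\beta^{\hat{x}\hat{\times}\hat{y}_2}-\beta^{\hat{x}\hat{\times}\hat{y}_1}$ dominates this bound --- exactly the paper's computation with the roles of the two factors swapped. Your explicit handling of the shared fresh noise symbol $\varepsilon_f$ is a welcome bit of bookkeeping that the paper's proof glosses over (it even asserts the bound on $\normA{\Delta}$ as an equality, where only the inequality holds and is needed).
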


\begin{proof}
We have easily, for $\hat{x} \leq \hat{y}$ and $\hat{z} \in \RA_+$:
$$\begin{array}{lll}
\normA{\hat{x}\hat{+}\hat{z}-\hat{y}\hat{+}\hat{z}} &=& \normA{\hat{x}-
\hat{y}} 
\leq  \beta^y - \beta^x =  \beta^{y+z} - \beta^{x+z} \\
\normA{\hat{-}\hat{x}-\hat{-}\hat{y}} &=& \normA{\hat{x}-\hat{y}}
\leq \beta^y - \beta^x = \beta^{-y} - \beta^{-x}
\end{array}$$
Now:
\begin{eqnarray}
\normA{\hat{x}\hat{\times}\hat{z}-\hat{y}\hat{\times}\hat{z}} & = &
\normA{z}\normA{x-y}\\
& \leq & \normA{z}\left(\beta^y - \beta^x \right)
\label{eq1}
\end{eqnarray}
$$\begin{array}{rcl}
\beta^{y \times z} - \beta^{x \times z} & = &
\normA{z}(\beta^y - \beta^x)+\\
& & \beta^z (\normA{\hat{y}}-\normA{\hat{x}}
+\beta^y - \beta^x )
\end{array}$$
But $\hat{x}\leq \hat{y}$ so 
$%\begin{array}{rcl}
\beta^y-\beta^x \geq \normA{\hat{y}-\hat{x}} \geq \normA{\hat{x}}-\normA{\hat{y}}$
%\end{array}$$
the last inequality being entailed by the triangular inequality.
Thus, 
$$\begin{array}{rcl}
\beta^{y \times z} - \beta^{x \times z} & \geq & \normA{z}(\beta^y - \beta^x)
\end{array}$$
which, by combining with inequality (\ref{eq1}), completes the proof.
$\Box$

\end{proof}

\begin{prop}
\label{equivinfoperturb}
$\hat{x} \leq \hat{y}$ if and only if $\hat{x} \preceq \hat{y}$
\end{prop}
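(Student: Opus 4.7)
The plan is to combine the three preceding lemmas into a compositional induction on expressions. The direction $\hat{x} \preceq \hat{y} \Rightarrow \hat{x} \leq \hat{y}$ is exactly Lemma~\ref{predleq}; only the converse requires work.

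First I would reduce the goal about concretizations to a goal about the perturbation order itself. The statement $\hat{x} \preceq \hat{y}$ requires that for every polynomial expression $e \in Expr$ and every environment $\sigma$, $\gamma \semb e \seme \sigma[u\leftarrow \hat{x}] \subseteq \gamma \semb e \seme \sigma[u \leftarrow \hat{y}]$. By Lemma~\ref{conc}, this will follow as soon as we establish the strictly stronger claim
\[
\semb e \seme \sigma[u\leftarrow \hat{x}] \;\leq\; \semb e \seme \sigma[u\leftarrow \hat{y}]
\]
in the perturbation order on $\RA_+$, for every $e$ and $\sigma$.

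I would then prove this stronger claim by structural induction on $e$. The base cases are immediate: for a real constant, the two abstract values coincide; for a variable $v$, either $v \neq u$ and the two values again coincide, or $v = u$ and the inequality is exactly the hypothesis $\hat{x} \leq \hat{y}$. For the inductive cases $e = e_1 + e_2$, $e = e_1 \times e_2$, or $e = -e_1$, the induction hypothesis yields perturbation inequalities on the sub-expressions, and Lemma~\ref{increasing} — monotonicity of $\hat{+}$, $\hat{-}$ and $\hat{\times}$ in each argument on $\RA_+$ — lets us chain two monotonicity steps (one per argument, using commutativity to swap the roles where needed) to conclude the desired inequality on $e$.

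The main conceptual work has already been discharged in Lemmas~\ref{conc} and~\ref{increasing}; the step that still deserves care is the multiplication case, because Lemma~\ref{increasing} is stated only on the cone $\RA_+$. This is fine here: the hypotheses of the proposition place $\hat{x}, \hat{y}$ in $\RA_+$, and the abstract operators $\hat{+}$, $\hat{-}$ and $\hat{\times}$ all produce affine forms with nonnegative $\beta$-component (directly from their definitions), so every intermediate value encountered in the induction lies again in $\RA_+$ and Lemma~\ref{increasing} applies unrestrictedly. Finally, the environment-level formulation alluded to in Theorem~\ref{maintheo1} is obtained by iterating the single-variable result: from $\sigma_1 \leq \sigma_2$ componentwise, one substitutes variables one at a time and invokes the statement just proven, yielding $\gamma \semb e \seme \sigma_1 \subseteq \gamma \semb e \seme \sigma_2$ for every expression $e$, which is precisely $\sigma_1 \preceq \sigma_2$.
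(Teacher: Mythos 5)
Your proof follows the paper's own argument essentially verbatim: one direction is Lemma~\ref{predleq}, and the converse is the structural induction on $e$ using the monotonicity of $\hat{+}$, $\hat{-}$ and $\hat{\times}$ from Lemma~\ref{increasing} to get $\semb e \seme \sigma[u \leftarrow \hat{x}] \leq \semb e \seme \sigma[u \leftarrow \hat{y}]$, then Lemma~\ref{conc} to pass to concretizations. Your added observation that the abstract operators preserve the cone $\RA_+$ (so Lemma~\ref{increasing} applies at every inductive step) is a point the paper leaves implicit, and is correct.
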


\begin{sketch}
We know from Lemma \ref{predleq} that $\hat{x} \preceq \hat{y}$ implies
$\hat{x} \leq \hat{y}$. Now, let $e \in Expr$, and suppose $\hat{x}
\leq \hat{y}$. We reason by induction on $e$: the base case is constants
and variables (trivial). 

A consequence of Lemma \ref{increasing} is
that for all $\hat{z} \in \RA_+$,
$\hat{x}\hat{+}\hat{z} \leq \hat{y}\hat{+}\hat{z}$,
$\hat{-}\hat{x}\leq \hat{-}\hat{y}$ and 
$\hat{x}\hat{\times}\hat{z} \leq \hat{y}\hat{\times}\hat{z}$.
By induction on the syntax of $e$,
we then have $\semb e \seme \sigma[u \leftarrow \hat{x}] 
\leq \semb e \seme \sigma[u \leftarrow \hat{y}]$. 
This implies by Lemma \ref{conc} that 
$\gamma \semb e \seme \sigma[u \leftarrow \hat{x}] \subseteq 
\gamma \semb e \seme \sigma[u \leftarrow \hat{y}]$, hence
$\hat{x} \preceq \hat{y}$.
$\Box$
\end{sketch}

\vskip .2cm

Finally, we can prove the following more general equivalence, which is nothing but
obvious at first. The example below shows the subtlety of this result.

\begin{exem}
\label{secondjointaffine}
To illustrate one of the aspects of next theorem, that is, 
$\hat{x} \leq \hat{x'}$ implies that any {\em joint concretization} of
$\hat{x'}$ with other affine forms, (say just one, $\hat{y}$, here), contains
the joint concretization of $\hat{x}$ with $\hat{y}$, take again 
$\hat{x}$ as in Example \ref{firstjointaffine} and
$$\begin{array}{rcl}
\hat{x'} & = & \frac{3}{2}+\frac{1}{2}\varepsilon_1+\varepsilon_2+2\varepsilon_U
\end{array}$$
Of course, $\hat{x}\leq \hat{x}'$; 
Figure \ref{inclusionaffine} shows the inclusion of the joint concretization
of $(\hat{x},\hat{y})$ in the joint concretization of $(\hat{x'},\hat{y})$.
Note that several of the faces produced are fairly different.
\end{exem}

\begin{center}
\begin{figure}
\begin{center}
\epsfig{file=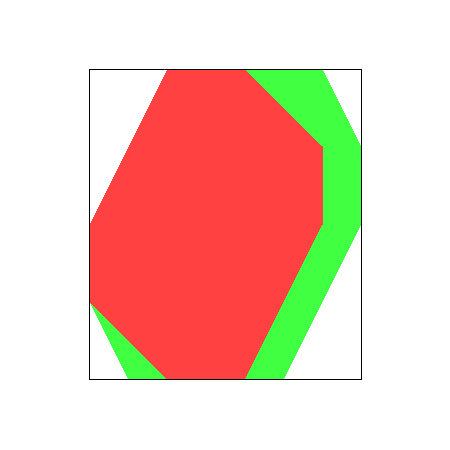,width=6cm, height=3cm, clip=}
\caption{Joint concretization of Example \ref{firstjointaffine}, included in the joint concretization of
Example \ref{secondjointaffine}.}
\label{inclusionaffine}
\end{center}
\end{figure}
\end{center}

\begin{theo}
\label{maintheo1}
Let $\sigma_1$, $\sigma_2$ be two abstract environments, then % such
%that $\sigma_1 \leq \sigma_2$. 
%Then $\gamma(\sigma_1) \subseteq 
%\gamma(\sigma_2)$.
%We also have $\sigma_1 \preceq \sigma_2$
 $\sigma_1 \leq \sigma_2$ if and only if $\sigma_1 \preceq \sigma_2$
\end{theo}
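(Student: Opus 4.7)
My plan is to mimic the proof of Proposition \ref{equivinfoperturb}, one implication at a time, lifting the individual-form argument to environments componentwise.

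For the easier direction $\sigma_1 \leq \sigma_2 \Rightarrow \sigma_1 \preceq \sigma_2$, I would proceed by structural induction on $e \in Expr$. The base cases of constants and variables are immediate from the componentwise hypothesis. In the inductive step, Lemma \ref{increasing} gives that each of $\hat{+}, \hat{-}, \hat{\times}$ is increasing on $(\RA_+, \leq)$ in each argument separately, so applying monotonicity once per subexpression propagates the inductive hypothesis to $\semb e \seme \sigma_1 \leq \semb e \seme \sigma_2$ in $\RA_+$. A single appeal to Lemma \ref{conc} then converts this perturbation-order inequality into the required interval inclusion $\gamma \semb e \seme \sigma_1 \subseteq \gamma \semb e \seme \sigma_2$, which is the definition of $\sigma_1 \preceq \sigma_2$.

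For the subtler converse $\sigma_1 \preceq \sigma_2 \Rightarrow \sigma_1 \leq \sigma_2$, I would fix $u \in \var$ and reduce to showing $\sigma_1(u) \leq \sigma_2(u)$. Writing $\hat{x} = \sigma_1(u)$ and $\hat{y} = \sigma_2(u)$, the strategy of Lemma \ref{predleq} prescribes exactly which test case to produce: evaluate the expression $u - v$ in an environment whose $v$-value is the pure-linear form $\pi_L(\hat{y}) + \alpha_0^y$. Under this choice, the $\hat{y}$-side of the concretization collapses to $[-\beta^y,\beta^y]$, and requiring the $\hat{x}$-side to sit inside it gives, by the same interval-endpoint manipulation as in Lemma \ref{predleq}, exactly $\normA{\hat{x} - \hat{y}} \leq \beta^y - \beta^x$, i.e.\ $\hat{x} \leq \hat{y}$.

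The main obstacle is that the fixed environments $\sigma_1, \sigma_2$ need not contain any variable carrying the required value, so the test case of Lemma \ref{predleq} is not directly available. My plan is to enlarge $\var$ with a fresh variable $v$ and set $\sigma_1'(v) = \sigma_2'(v) = \pi_L(\hat{y}) + \alpha_0^y$, and then to obtain the single required inclusion $\gamma \semb u - v \seme \sigma_1' \subseteq \gamma \semb u - v \seme \sigma_2'$ from the hypothesis $\sigma_1 \preceq \sigma_2$. I do \emph{not} expect a blanket extension lemma ``$\sigma_1 \preceq \sigma_2 \Rightarrow \sigma_1' \preceq \sigma_2'$'' to hold, because $\hat{v}$ shares its noise symbols with $\hat{y}$ and interval inclusion is not preserved by $\hat{+}$ in the presence of cancellations; this is the genuine subtlety the theorem advertises. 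Instead, I would exploit the specific structure of $u - v$: since $v$ appears linearly with the same coefficient $-1$ under both $\sigma_1'$ and $\sigma_2'$ and $\sigma_1'(v) = \sigma_2'(v)$, the $\hat{v}$-contribution is identical on the two sides and, after rearrangement, the required inclusion of intervals reduces to constraints that $\sigma_1 \preceq \sigma_2$ already supplies on a targeted family of expressions in $\var$ alone (essentially shifts and linear combinations that reconstruct the same endpoints). Once this targeted inclusion is secured, the endpoint calculation of Lemma \ref{predleq} gives $\sigma_1(u) \leq \sigma_2(u)$, and ranging over $u \in \var$ yields $\sigma_1 \leq \sigma_2$.
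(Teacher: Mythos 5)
Your forward direction ($\sigma_1 \leq \sigma_2 \Rightarrow \sigma_1 \preceq \sigma_2$) is correct: it is exactly the induction underlying Proposition \ref{equivinfoperturb}, applied one variable at a time via Lemma \ref{increasing} and closed off by Lemma \ref{conc}, and it suffices for the theorem as literally stated. The paper takes a different route for this half: it describes the joint concretization $\gamma(\sigma_2)$ as a zonotope cut out by finitely many affine constraints, applies Proposition \ref{equivinfoperturb} componentwise to the expressions $e^j$ realizing those constraints, deduces the inclusion of the joint concretizations $\gamma(\sigma_1) \subseteq \gamma(\sigma_2)$ as subsets of $\R^k$, and only then handles arbitrary $e$ by induction. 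The geometric detour buys the stronger statement illustrated in Figure \ref{inclusionaffine} (inclusion of the zonotopes, not merely of per-expression intervals); your direct induction delivers the literal claim more economically but does not give that.

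Your converse direction has a genuine gap, and you half-see it yourself. The fallback you propose --- recovering the effect of the test expression $u-v$ from ``a targeted family of expressions in $\var$ alone'' --- cannot work, because no family of expressions over $\var$ separates the two environments in general. Take $\var=\{u\}$, $\sigma_1(u)=1+\varepsilon_1$ and $\sigma_2(u)=1+\varepsilon_2$ (the paper's own example following Lemma \ref{conc}): every $e \in Expr$ built from $u$ and constants evaluates under the two environments to affine forms differing only by the renaming $\varepsilon_1 \leftrightarrow \varepsilon_2$, hence with equal concretizations, so $\gamma\semb e \seme \sigma_1 \subseteq \gamma \semb e \seme \sigma_2$ for every such $e$; yet $\normA{\sigma_1(u)-\sigma_2(u)}=2>0=\beta^{\sigma_2(u)}-\beta^{\sigma_1(u)}$, so $\sigma_1(u) \not\leq \sigma_2(u)$. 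The converse genuinely requires the freedom to evaluate against continuations introducing fresh variables whose values share $\sigma_2(u)$'s noise symbols --- precisely the quantification over all environments that Definition \ref{continuationorder} builds into $\preceq$ for individual forms and that Lemma \ref{predleq} exploits. The paper's own sketch only argues the $\leq \Rightarrow \preceq$ half and implicitly delegates the converse to that per-form mechanism; if the environment-level $\preceq$ is read as closed under such extensions (as the name ``continuation order'' suggests), your fresh-variable construction goes through verbatim via Lemma \ref{predleq}, but under the literal definition quantifying only over $e \in Expr$ on the fixed $\sigma_1,\sigma_2$, the implication you are trying to establish is not available by any choice of expressions.
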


\begin{sketch}
It can be shown first (classical result in affine arithmetic \cite{Stolfi}),
that $\gamma(\sigma_2)$ is a polyhedron (a particular kind, called a
{\em zonotope}). It means that it can be equivalently described by a
system of affine constraints ($j=1,\ldots,k$):
$$\sum_{x \in \var} a^j_x x \leq b^j$$

Consider the expressions (in $Expr$): $e^j=\sum_{x \in \var} a^j_x x - b^j$.
We know that for all $x \in \var$, $\sigma_1(x) \leq \sigma_2(x)$, hence
by Proposition \ref{equivinfoperturb}, $\sigma_1(x) \preceq \sigma_2(x)$.
This entails, by induction on $\var$, that $\gamma \semb e^j \seme \sigma_1
\subseteq \gamma \semb e^j \seme \sigma_2$. Thus the constraint
$\sum_{x \in \var} a^j_x x \leq b^j$ is satisfied by elements of
$\gamma(\sigma_1)$, by Lemma \ref{base}. So $\gamma(\sigma_1) \subseteq 
\gamma(\sigma_2)$.

Let $e$ be any expression in 
$Expr$. The result follows from Proposition \ref{equivinfoperturb}
and the result above, by induction on $e$.% HIP
\end{sketch}

\subsection{Ordered Banach structure}

\label{Banach}

The aim of this section is to prove Proposition \ref{orderedBanach}.
This will be central to the proofs in Sections \ref{riesz} and
\ref{convsection}. 

\begin{prop}
\label{orderedBanach}
$(\RA_1,\leq)$ is an ordered Banach space.
\end{prop}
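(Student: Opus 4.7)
\emph{Plan.} I would verify that $(\RA_1,\leq)$ satisfies the three requirements of an ordered Banach space: (i) $\RA_1$ is complete under $\norm{\cdot}_1$; (ii) the order $\leq$ is compatible with the vector space structure; and (iii) the order is topologically closed. That $\leq$ is already a partial order has been established in Definition \& Lemma \ref{perturborder}, so only these three items remain.

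For (i), since $\RAff_1$ has been equipped with its $\ell_1$ Banach structure in Section \ref{affine}, and $\RA_1=\RAff_1\oplus\R$ with $\norm{\hat x}_1=|\alpha_0^x|+\normL{\hat x}+|\beta^x|$ splits componentwise into the three coordinates $\alpha_0^x$, $\pi_L(\hat x)$ and $\beta^x$, any Cauchy sequence in $\RA_1$ projects to Cauchy sequences in each factor. Completeness of $\R$ and of $\RAff_1$ then yields a componentwise limit lying in $\RA_1$, which is the $\norm{\cdot}_1$-limit of the original sequence.

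For (ii), I would first observe that $\hat x\leq\hat y$ iff $\hat 0\leq\hat y-\hat x$, since both $\normA{\cdot}$ and $\beta$ are respectively invariant and linear under subtracting a common vector from $\hat x$ and $\hat y$. Translation invariance $\hat x+\hat z\leq \hat y+\hat z$ is then immediate from $(\hat x+\hat z)-(\hat y+\hat z)=\hat x-\hat y$ and $\beta^{y+z}-\beta^{x+z}=\beta^y-\beta^x$, while compatibility with positive scalars follows from $\normA{\lambda\hat x-\lambda\hat y}=\lambda\normA{\hat x-\hat y}\leq\lambda(\beta^y-\beta^x)=\beta^{\lambda y}-\beta^{\lambda x}$ for $\lambda\geq 0$.

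For (iii), I would describe the positive cone as $K=\{\hat x\in\RA_1 : \normA{\hat x}\leq\beta^x\}$. Since $\hat x\mapsto\beta^x$ is continuous (as $|\beta^x|\leq\norm{\hat x}_1$) and $\normA{\cdot}$ is a continuous seminorm (dominated by $\norm{\cdot}_1$), the map $\hat x\mapsto\beta^x-\normA{\hat x}$ is continuous and $K$ is the preimage of $[0,\infty)$, hence closed; by translation invariance the graph of $\leq$ is then closed in $\RA_1\times\RA_1$. The only minor pitfall is to keep the genuine vector-space operations of $\RA_1$ distinct from the abstract-semantic operators $\hat{+},\hat{-},\hat{\times}$ of Section \ref{arithexpr}: the functional $\beta$ is linear under the former (so $\beta^{-\hat x}=-\beta^x$) but not under the latter. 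With this caveat, the whole argument reduces to routine continuity and norm inequalities, and there is no real obstacle.
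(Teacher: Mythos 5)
Your proposal is correct and follows the same overall skeleton as the paper's sketch: (a) compatibility of $\leq$ with the linear structure, and (b) closedness of the order in the product topology. The one place where you genuinely diverge is step (iii): the paper proves closedness by a direct sequential $\epsilon$-estimate (take $x_n\to x$ with $x_n\leq y$ and pass to the limit in $\normA{y-x_n}\leq\beta^y-\beta^{x_n}$), whereas you identify the positive cone $K=\{\hat x : \normA{\hat x}\leq\beta^x\}$, observe that $\hat x\mapsto\beta^x-\normA{\hat x}$ is continuous because both $\beta$ and $\normA{\cdot}$ are dominated by $\norm{\cdot}_1$, and conclude that $K$ is a closed set whose translate gives the graph of $\leq$. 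This is cleaner and is in fact exactly the cone ${\cal C}$ that the paper only introduces as a remark \emph{after} its proof; you promote that remark to the proof itself, which buys you a one-line argument instead of an $N(\epsilon),K(\epsilon)$ bookkeeping. You also add two things the paper leaves implicit: an explicit verification that $\RA_1=\RAff_1\oplus\R$ is complete (a routine componentwise argument, but worth stating since ``Banach'' is part of the claim), and the caveat that the order axioms must be checked against the genuine vector-space negation rather than the abstract operator $\hat{-}$ of Section \ref{arithexpr}, for which $\beta^{\hat{-}\hat x}=\beta^x$ rather than $-\beta^x$. That caveat is apt and prevents a real confusion, since Lemma \ref{increasing} shows $\hat{-}$ is order-\emph{preserving} while the vector-space negation must be order-\emph{reversing}. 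No gaps.
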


\begin{sketch}
First, we show that the partial order $\leq$ of Definition and 
Lemma \ref{perturborder} makes $\RA_1$ into an ordered
vector space. 

For showing this, we have
to show {\em compatibility} of $\leq$ with the linear structure, i.e., for
$\lambda \geq 0$ and $\hat{x} \leq \hat{y}$, and for all $\hat{z}$:
%$$\begin{array}{rcl}
$\hat{x}+\hat{z} \leq \hat{y}+\hat{z}$, 
$\lambda \hat{x} \leq \lambda \hat{y}$, and
$-\hat{y} \leq -\hat{x}$, 
%\end{array}$$
which is immediate verification.

The only remaining property to prove is that
$\leq$ is closed in $\RA_1 \times \RA_1$, in the product topology, 
$\RA_1$ being given the Banach topology of the $\ell_1$ norm.
Suppose $x_n$ converges towards $x$ as $n$ goes towards $\infty$, in the sense of the $\ell_1$ norm, and
suppose for all $n$, $x_n \leq y$. Then
$$\begin{array}{rcl}
\normA{y-x} & \leq & \normA{y-x_n}+\normA{x_n-x} \\
& \leq & \pi_U(y)-\pi_U(x_n)+\frac{\varepsilon}{2}
\end{array}$$
for all $\epsilon > 0$ and $n \geq N(\epsilon)$. By continuity of
$\pi_U$, we thus know that there exists $K(\epsilon)$ such that
for all $n \geq \sup(N(\epsilon),K(\epsilon))$:
$$\begin{array}{rcl}
\normA{y-x} & \leq & \normA{y-x_n}+\normA{x_n-x} \\
& \leq & \pi_U(y)-\pi_U(x)+\epsilon
\end{array}$$
This concludes the proof. 
$\Box$
\end{sketch}

A different way of stating that $(\RA_1,\leq)$ is an ordered vector
space is to introduce the subset ${\cal C}$ 
of $\RA_1$ such that
$$x \leq y \Leftrightarrow y-x \in {\cal C},$$
and show it is indeed the {\em cone} of $\leq$, see \cite{conesduality}.
We see that $${\cal C}=\left\{x \mid \normA{x} \leq \beta^x\right\}.$$
This is the analogue of the {\em Lorentz cone} in special relativity
theory, but with the $\ell_1$ norm instead of the $\ell_2$ norm.

To use the vocabulary from relativity theory, identifying the $\RAff$ part
with the space coordinates and the $\beta$ coefficient with the time
coordinate, $\leq$ is the {\em causal order} and
$\hat{x}\leq \hat{y}$ if the {\em space-time interval}
$[\hat{x},\hat{y}]$ is {\em time-like} or {\em light-like}, 
%(the {\em proper time}
%between $\hat{x}$ and $\hat{y}$ being $\beta^y-\beta^x-\normA{\hat{y}-\hat{x}}$), 
whereas 
$\hat{x} \geq \hat{y}$ if $[x,y]$ is {\em space-like} or {light-like}. 
%(and
%their {\em proper distance} being $\beta^x-\beta^y-\normA{\hat{x}-\hat{y}}$).
Other considerations, using domain-theoretic methods, on the causal order in the
case of the $\ell_2$ Lorentz cone can be found for instance in
\cite{Panangaden}.

\subsection{The quasi lattice structure}

\label{quasilattice}

We will show in this section that $(\RA_1,\leq)$ is almost a bounded
complete partial order (bcpo). 
It is not a bcpo because there is not in general any
least upper bound. This is a consequence of \cite{Krein48}: as the cone $C$ of
our partial order has $2^n$ generators (the generators of the polyhedron
which is the unit $\ell 1$ ball), it cannot be simplicial, hence $(\RA_1,\leq)$
is not a lattice. Instead, there are in general infinitely many
{\em minimal upper bounds}, which will suffice for our semantics purposes. 
We prove furthermore that {\em many} bounded subset of
$\RA_1$ (``enough'' again) admit minimal upper bounds. 

We first recall the definition of a minimal upper bound or {\em mub}
(maximal lower bounds, or {\em mlb}, are defined similarly):

\begin{defi}
Let $\sqsubseteq$ be a partial order on a set $X$. We say that
$z$ is {\em a mub} of two elements $x,y$ of $X$ if and only if
\begin{itemize}
\item $z$ is an upper bound of $x$ and $y$, i.e. $x \sqsubseteq z$ and
$y \sqsubseteq z$,
\item for all $z'$ upper bound of $x$ and $y$,
$z'\sqsubseteq z$ implies $z=z'$.
\end{itemize}
\end{defi}

We note that for the order $\leq$, we have a very simple characterization
of mubs, if they exist (proving existence, and deriving some formulas, when available, 
are the aims of the section to come).

\begin{lemm}
\label{mubs}
\label{necmin}
Let $\hat{x}$ and $\hat{y}$ be two elements of $\RA_1$. Then $\hat{z}$ is
a mub of $\hat{x}$ and $\hat{y}$ if and only if
\begin{itemize}
\item $\hat{x} \leq \hat{z}$ and $\hat{y} \leq \hat{z}$,
\item $\beta^z$ is minimal among the $\beta^t$, for all $\hat{t}$ upper bounds
of $\hat{x}$ and $\hat{y}$. 
\end{itemize}
\end{lemm}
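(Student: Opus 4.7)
The plan is to prove the two implications separately. For the reverse direction ($\Leftarrow$), I would argue directly: assuming $\hat{z}$ is an upper bound of $\hat{x},\hat{y}$ whose $\beta^z$ is the minimum of $\beta^t$ over upper bounds $\hat{t}$, any other upper bound $\hat{w}\leq \hat{z}$ satisfies $\normA{\hat{w}-\hat{z}}\leq \beta^z-\beta^w$ by the definition of $\leq$, hence $\beta^w\leq \beta^z$; the minimality assumption forces $\beta^w=\beta^z$, and then $\normA{\hat{w}-\hat{z}}=0$ gives $\hat{w}=\hat{z}$, so $\hat{z}$ is $\leq$-minimal.

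For the forward direction ($\Rightarrow$), my first step would be to introduce the convex function $f(w):=\max(\normA{w-x}+\beta^x,\normA{w-y}+\beta^y)$, where $w$ denotes the $\RAff_1$ part of $\hat{w}$. Upper bounds of $\hat{x},\hat{y}$ are exactly the $\hat{w}$ with $\beta^w\geq f(w)$, and $f$ is convex and $1$-Lipschitz with respect to $\normA{\cdot}$. Given a mub $\hat{z}$, a tightness observation gives $\beta^z=f(z)$: the element $\hat{z}'$ with the same $\RAff_1$ part and $\beta^{z'}:=f(z)$ would otherwise be an upper bound strictly below $\hat{z}$ (since $\normA{\hat{z}'-\hat{z}}=0\leq \beta^z-\beta^{z'}$), contradicting minimality. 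I would then show $\beta^z = B := \inf f$ by contradiction, assuming $f(z)>B$ and producing a nonzero $v\in\RAff_1$ with $f(z+\lambda v)=f(z)-\lambda\normA{v}$ for small $\lambda>0$. The element $\hat{z}_\lambda := (z+\lambda v)+f(z+\lambda v)\varepsilon_U$ would then be an upper bound with $\normA{\hat{z}_\lambda-\hat{z}}=\lambda\normA{v}=\beta^z-\beta^{z_\lambda}$, so $\hat{z}_\lambda\leq\hat{z}$, and $\hat{z}_\lambda\neq\hat{z}$, contradicting the minimality of $\hat{z}$.

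The main technical obstacle is the construction of $v$. Writing $f_1(w):=\normA{w-x}+\beta^x$ and $f_2(w):=\normA{w-y}+\beta^y$, I would split cases on which is active at $z$. If only one is active, say $f_1(z)>f_2(z)$, I would take $v:=x-z$; then $f_1(z+\lambda v)=(1-\lambda)\normA{z-x}+\beta^x=f_1(z)-\lambda\normA{v}$ for $\lambda\in[0,1]$, and $f$ agrees locally with $f_1$ by continuity. If both are active, the $\ell^1$-geometry comes in: the triangle equality $\normA{z-x}+\normA{z-y}=\normA{x-y}$ holds iff the $\alpha$-coordinates of $z$ lie componentwise between those of $x$ and $y$, and this would yield $f(z)=(\normA{x-y}+\beta^x+\beta^y)/2$, a quantity that lower-bounds the $\beta$-coordinate of every upper bound (by summing the two defining inequalities) and is therefore $\leq B$, contradicting $f(z)>B$. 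Hence some coordinate $j$ would have $\alpha_j^x$ and $\alpha_j^y$ strictly on the same side of $\alpha_j^z$, and taking $v:=\pm\varepsilon_j$ signed toward that side would give $f_i(z+\lambda v)=f_i(z)-\lambda$ for $i=1,2$ and small $\lambda>0$, closing the argument.
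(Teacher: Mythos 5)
Your proof is correct, and it actually does strictly more than the paper's own argument. The paper only proves the ``if'' direction (an upper bound whose $\beta$ is minimal is a mub), exactly as in your first paragraph: from $\hat w\leq\hat z$ deduce $\beta^w\leq\beta^z$, invoke minimality to get equality, and conclude from $\normA{\hat w-\hat z}=0$. The ``only if'' direction is not argued there at all; it is only recovered later, and only under the existence condition (\ref{exist_mub}), from the closed-form value $\beta^z=\frac12(\normA{\hat x-\hat y}+\beta^x+\beta^y)$ established in Proposition \ref{join1}. Your route for that direction is genuinely different and more self-contained: you encode the upper bounds of $\{\hat x,\hat y\}$ as the epigraph $\{\beta^w\geq f(w)\}$ of the convex $1$-Lipschitz function $f=\max(\normA{\cdot-x}+\beta^x,\normA{\cdot-y}+\beta^y)$, observe that a mub must be tight ($\beta^z=f(z)$), and rule out $f(z)>\inf f$ by producing a direction along which $f$ decreases at the exact Lipschitz rate, so that the perturbed form stays below $\hat z$ for the perturbation order. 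The case analysis is sound, including the use of the componentwise characterization of $\ell_1$ triangle equality; the one point worth making explicit is the degenerate subcase $z=x$ (resp.\ $z=y$) when only $f_1$ (resp.\ $f_2$) is active, where $v=x-z$ vanishes --- but there $f(z)=\beta^x\leq f(w)$ for all $w$, so $f(z)=\inf f$ already and the subcase is vacuous under your standing hypothesis $f(z)>B$. What your approach buys is a proof of the forward implication that works uniformly on $\RA_1$ without assuming (\ref{exist_mub}) (under which the paper's Proposition \ref{join1} operates); what the paper's route buys, when it applies, is the explicit formula for the minimal $\beta$.
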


\begin{proof}
Suppose we have $\hat{z}$ such as defined above. Take any upper bound
$\hat{t}$ of $\hat{x}$ and $\hat{y}$ and suppose $\hat{t}\leq \hat{z}$.
Then:
%$\begin{array}{rcl}
$\normA{\hat{z}-\hat{t}} \leq \beta^z - \beta^t$.
%\end{array}$
Hence, $\beta^z \geq \beta^t$. But by hypothesis, $\beta^z$ is minimal
among all upper bounds, so $\beta^z=\beta^t$. Then this implies
$\normA{\hat{z}-\hat{t}}=0$ so $\pi_A(\hat{z})=\pi_A(\hat{t})$ as well,
hence $\hat{z}=\hat{t}$.
$\Box$
\end{proof}

In what follows, we will need an extra definition:

\begin{defi}
Let $x$ and $y$ be two intervals. We say that $x$ and $y$ are in 
generic positions if, whenever $x \subseteq y$, $\inf x = \inf y$ or
$\sup x = \sup y$. 

By extension, we say that two affine forms $\hat{x}$ and $\hat{y}$ are
in generic position when $\gamma(\hat{x})$ and $\gamma(\hat{y})$ are
intervals in generic positions.
\end{defi}

\subsubsection{The join operation}
\label{sec_join}
For any interval $i$, we note $mid(i)$ its center. Let $\alpha_i^x \wedge \alpha_i^y$ denote the minimum of the 
two real numbers, and $\alpha_i^x \vee \alpha_i^y$ their maximum. We define
\[
\argmin{\alpha^x_i}{\alpha^y_i} = \{\alpha \in [\alpha_i^x \wedge \alpha_i^y,\alpha_i^x \vee \alpha_i^y],
 |\alpha| \mbox{ minimal}\}
\]
\[
\argmax{\alpha^x_i}{\alpha^y_i} = \{\alpha \in [\alpha_i^x \wedge \alpha_i^y,\alpha_i^x \vee \alpha_i^y], \\
 |\alpha| \mbox{ maximal}\}
\]

\begin{prop}
\label{join1}
Let $\hat x, \hat y \in \RA_1$. There exist minimal upper bounds $\hat z$ of $\hat x$ and $\hat y$ if and only if 
\begin{equation}
\label{exist_mub}
%| \sum_{i \geq 0} (\alpha_i^x - \alpha_i^y) | \geq | \beta^y - \beta^x| .
\normA{\hat x- \hat y} \geq | \beta^y - \beta^x|
\end{equation}
Moreover, the minimal upper bounds, when they exist, all satisfy 
\begin{equation} \beta^z = \frac12 (\normA{\hat x- \hat y} + \beta^x + \beta^y),
\label{beta_opt}
\end{equation}
\begin{equation}
\alpha_i^x \wedge \alpha_i^y \leq  \alpha_i^z \leq \alpha_i^x \vee \alpha_i^y, \; \forall i \geq 0,
\label{minmax}
\end{equation}
and they are such that %in general infinitely many minimal upper bounds, and they are such that
$\normA{\hat x- \hat z} = \beta^z - \beta^x$ and $\normA{\hat y-\hat z} = \beta^z - \beta^y$.

\end{prop}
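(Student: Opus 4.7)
The plan is to derive a universal lower bound on $\beta^z$ for any upper bound $\hat{z}$ from the triangle inequality, connect this bound to the feasibility constraints $\beta^z \geq \beta^x$ and $\beta^z \geq \beta^y$ (from which (\ref{exist_mub}) drops out), construct an explicit mub attaining the bound, and finally show every mub has the form described by (\ref{beta_opt})--(\ref{minmax}).

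First I would establish the lower bound. Any upper bound $\hat{z}$ of $\hat{x}$ and $\hat{y}$ satisfies $\normA{\hat{x}-\hat{z}} \leq \beta^z - \beta^x$ and $\normA{\hat{z}-\hat{y}} \leq \beta^z - \beta^y$; adding and applying $\normA{\hat{x}-\hat{y}} \leq \normA{\hat{x}-\hat{z}} + \normA{\hat{z}-\hat{y}}$ yields $\beta^z \geq \frac{1}{2}(\normA{\hat{x}-\hat{y}}+\beta^x+\beta^y)$. Feasibility also requires $\beta^z \geq \beta^x$ and $\beta^z \geq \beta^y$, and the triangle bound is the binding one exactly when $\normA{\hat{x}-\hat{y}} \geq |\beta^y-\beta^x|$, which is precisely (\ref{exist_mub}). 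By Lemma \ref{mubs}, under (\ref{exist_mub}) any mub must attain (\ref{beta_opt}).

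Next I would construct an explicit mub attaining (\ref{beta_opt}). Set $\beta^z$ as in (\ref{beta_opt}) and $\alpha_i^z = (1-t_i)\alpha_i^x + t_i \alpha_i^y$ for coefficients $t_i \in [0,1]$ to be chosen. Pointwise one has $|\alpha_i^z-\alpha_i^x|+|\alpha_i^z-\alpha_i^y|=|\alpha_i^x-\alpha_i^y|$, hence $\normA{\hat{z}-\hat{x}}+\normA{\hat{z}-\hat{y}}=\normA{\hat{x}-\hat{y}}$. I would then pick the $t_i$ (for instance all equal to a common value) so that $\sum_i t_i |\alpha_i^y-\alpha_i^x| = \beta^z - \beta^x$; such a choice exists because this target lies in $[0,\normA{\hat{x}-\hat{y}}]$ by (\ref{exist_mub}). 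The resulting $\hat{z}$ then satisfies $\normA{\hat{x}-\hat{z}} = \beta^z-\beta^x$ and $\normA{\hat{y}-\hat{z}} = \beta^z-\beta^y$, so is an upper bound; its $\beta^z$ being the minimum feasible value, Lemma \ref{mubs} makes it a mub.

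Finally I would show any mub $\hat{z}$ has the stated form. Minimality of $\beta^z$ forces $\normA{\hat{x}-\hat{z}} + \normA{\hat{z}-\hat{y}} = \normA{\hat{x}-\hat{y}}$; this is the equality case of the $\ell_1$ triangle inequality, which coordinatewise forces $\alpha_i^z \in [\alpha_i^x \wedge \alpha_i^y,\; \alpha_i^x \vee \alpha_i^y]$, giving (\ref{minmax}). Moreover, the two nonnegative slacks $\beta^z - \beta^x - \normA{\hat{x}-\hat{z}}$ and $\beta^z - \beta^y - \normA{\hat{y}-\hat{z}}$ sum to zero and so each vanishes, yielding the final equalities. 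The subtle point, which I expect to be the main obstacle, is the degenerate case where (\ref{exist_mub}) fails: there $\hat{x}$ and $\hat{y}$ are strictly comparable, and the larger one is itself a mub (in fact the lub) that violates (\ref{beta_opt}). The ``only if'' in the proposition must therefore be read as characterizing exactly when the canonical mub construction applies, equivalently, when the minimum feasible $\beta^z$ strictly exceeds both $\beta^x$ and $\beta^y$.
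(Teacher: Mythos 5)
Your proof is correct and follows essentially the same route as the paper's: the lower bound $\beta^z \geq \frac12(\normA{\hat x-\hat y}+\beta^x+\beta^y)$ from the triangle inequality, the observation that attaining it forces equality in that triangle inequality (whence (\ref{minmax}) and the two equalities $\normA{\hat x-\hat z}=\beta^z-\beta^x$, $\normA{\hat y-\hat z}=\beta^z-\beta^y$), and an intermediate-value argument for existence. Your coordinatewise convex combination $\alpha_i^z=(1-t_i)\alpha_i^x+t_i\alpha_i^y$ is simply a more explicit parametrization of the path along which the paper's function $f(\hat z)=2\normA{\hat y-\hat z}-\normA{\hat x-\hat y}+\beta^y-\beta^x$ changes sign, and it cleanly verifies that the target $\beta^z-\beta^x$ lies in the attainable range $[0,\normA{\hat x-\hat y}]$ exactly under (\ref{exist_mub}).

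Your closing remark about the degenerate case is not merely a caveat: it is a genuine correction to the literal statement, and a point the paper's own proof glosses over. If (\ref{exist_mub}) fails, say $\normA{\hat x-\hat y}<\beta^y-\beta^x$, then by the very definition of the perturbation order $\hat x\leq\hat y$, so $\hat y$ is the \emph{least} upper bound of the pair and in particular a minimal one --- contradicting the stated ``only if''. (The same issue affects Example \ref{ex_u_0}, where $2\varepsilon_U$ is itself the lub of $\{\varepsilon_1,2\varepsilon_U\}$.) The paper's argument silently equates ``mub'' with ``upper bound whose $\beta$ attains the triangle-inequality bound (\ref{beta_opt})'', and your proposed reading --- that (\ref{exist_mub}) characterizes exactly when that canonical construction is feasible, equivalently when the bound (\ref{beta_opt}) is at least $\max(\beta^x,\beta^y)$ --- is the right repair.
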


\begin{proof}
We first characterize $\beta^z$ by expressing $\hat x \leq \hat z$ and $\hat y \leq \hat z$:
\[ \normA{\hat x-\hat y} \leq \normA{\hat x-\hat z} + \normA{\hat y-\hat z} \leq \beta^z - \beta^x + \beta^z - \beta^y.\] 
The smallest possible $\beta^z$ thus is 
$\beta^z = \frac12 (\normA{\hat x-\hat y} + \beta^x + \beta^y).$ 
Let us now characterize solutions with such a $\beta^z$, they satisfy:
\[ \normA{\hat x-\hat z} + \normA{\hat y-\hat z} \leq 2 \beta^z - \beta^x - \beta^y = \normA{\hat x-\hat y}, \]
thus implying that $\normA{\hat x-\hat z} + \normA{\hat y-\hat z} = \normA{\hat x-\hat y}$, which is equivalent to (\ref{minmax}).
Also, these solutions are such that $\normA{\hat x-\hat z} = \beta^z - \beta^x$ and $\normA{\hat y-\hat z} = \beta^z - \beta^y$.
Thus, there exist solutions only is $\beta^z - \beta^x \geq 0$ and $\beta^z - \beta^y \geq 0$, and the 
combination of these two equalities, with $\beta^z$ defined by (\ref{beta_opt}), is 
equivalent to (\ref{exist_mub}).

Let us now check that there exist minimal upper bounds under this assumption : we must prove that if (\ref{exist_mub}) 
holds, there exists $\hat z$ satisfying (\ref{beta_opt}) and (\ref{minmax}) such that 
\[ \normA{\hat x-\hat z} - \frac12 \normA{\hat x-\hat y} = \frac12 \normA{\hat x-\hat y} - \normA{\hat y-\hat z} = \frac12 (\beta^y - \beta^x) .\]
First part of this equality is always satisfied when (\ref{minmax}) holds. Second part is about the existence of 
solutions to $f(\hat z) = 2\normA{\hat y-\hat z}- \normA{\hat x-\hat y} + \beta^y - \beta^x = 0$. Using (\ref{exist_mub}), we have 
$f(\hat y) \leq 0$ and $f(\hat x) \geq 0$, so there exists indeed such minimal upper bounds $\hat z$ when (\ref{exist_mub}) is satisfied.
$\Box$
\end{proof}

\begin{exem}
\label{ex_u_0}
Take $\hat x = \varepsilon_1$ and $\hat y= 2 \varepsilon_U$, condition (\ref{exist_mub}) is not satisfied, so there 
exists no minimal upper bounds. Indeed, minimal upper bounds would be $\hat z = a + b \varepsilon_1 + 1.5$, 
with $0 \leq a \leq 0$ and $0 \leq b \leq 1$. And expressing $\normA{x-z}=\beta^z = 1.5$ gives $b=-0.5$, which 
is not admissible (not in [0,1]).
\end{exem}
We note that when $\hat x$ and $\hat y$ do not have $\varepsilon_U$ symbols, there always 
exist minimal upper bounds. In the case when they do not exist, we will use a widening 
introduced in Definition \ref{widen_1}.

\begin{exem}
\label{ex_u_1}
Take $\hat x = 1 + \varepsilon_1$ and $\hat y = 2 \varepsilon_1$. We have $\gamma(\hat x) = [0,2]$ and 
$\gamma(\hat y) = [-2,2]$, so $\hat x$ and $\hat y$ are in generic positions. Minimal 
upper bounds $\hat z$ of $\hat x$ and $\hat z$ are
$$\hat z = a + b \varepsilon_1 + \varepsilon_U,$$
where $\normA{\hat{x}-\hat{z}} = \normA{\hat{y}-\hat{z}} = \beta^z$. 
This implies $a-b = -1$, with 
$0 \leq a \leq 1$, and $1 \leq b \leq 2$. Among these solutions, we find a unique one 
that minimizes the width of the concretization, by taking $b = 1$ and thus $a = 0$. 
This solution satisfies $\gamma(\hat z) = \gamma(\hat x) \cup \gamma(\hat y)$,
$\alpha_0^z (= a) = mid (\gamma(\hat x) \cup \gamma(\hat y))$ and
$\alpha_1^z (= b) =  \argmin{\alpha^x_1}{\alpha^y_1}$. 
In Proposition \ref{join2}, we show that this is 
a general result when $\hat x$ and $\hat y$ are in generic positions.
\end{exem}

\begin{exem}
\label{ex_u_2}
Now take $\hat x = 1 + \varepsilon_1$ and $\hat y = 4 \varepsilon_1$, this time $\gamma(\hat x)$ and 
$\gamma(\hat y)$ are not in generic positions. Minimal upper bounds $\hat z$ are: 
$$\hat z = a + b \varepsilon_1 + 2 \varepsilon_U,$$
where $a-b = -2$, $0 \leq a \leq 1$, and $1 \leq b \leq 4$. Now, let us
minimize the width 
of the concretization as in the previous example. The problem is that 
we cannot choose in this case
$b= \argmin{\alpha^x_1}{\alpha^y_1}=1$ because then the value of $a$ (-1) deduced from  $a-b = -2$ 
is not admissible (it is not beween 0 and 1).  
The solution minimizing  the width of the concretization is in fact
$\hat z = 2 \varepsilon_1 + 2 \varepsilon_U$, and it is such that
$\alpha_0^z (= a) = mid (\gamma(\hat x) \cup \gamma(\hat y))$ and 
$\gamma(\hat z) = \gamma(\hat x) \cup \gamma(\hat y)$. 
%This solution is here the unique one satisfying 
%$\gamma(\hat z) = \gamma(\hat x) \cup \gamma(\hat y)$. But we will see that in the general case
%(several noise symbols), there can be infinitely many.
\end{exem}

We now give an intuition on the general case by taking examples with several noise symbols.
\begin{exem}
\label{ex_u_3}
Take $\hat x = 3 + \varepsilon_1 + 2 \varepsilon_2$ and $\hat y = 1 - 2 \varepsilon_1 + 
\varepsilon_2$. We have $\gamma(\hat x) = [0,6]$ and 
$\gamma(\hat y) = [-2,4]$, $\gamma(\hat x)$ and $\gamma(\hat y)$ are in generic positions.
 Minimal upper bounds are
$\hat z = a + b \varepsilon_1 + c \varepsilon_2 + 3 \varepsilon_U,$
where $a+b+c = 3$, $1 \leq a \leq 3$, $-2 \leq b \leq 1$, and $1 \leq c \leq 2$. 
Among these solutions, we can still find a unique one 
that minimizes the width of the concretization, taking $a = 2$, $b = 0$ and $c=1$: 
$\hat z = 2 + \varepsilon_2 + 3 \varepsilon_U$. 
\end{exem}

\begin{exem}
\label{ex_u_4}
Take $\hat x = 1 + \varepsilon_1 + 2 \varepsilon_2 + \varepsilon_3$ and 
$\hat y = -2 - 6 \varepsilon_1 + \varepsilon_2 + 2 \varepsilon_3$. 
We have $\gamma(\hat x) = [-3,5]$ and $\gamma(\hat y) = [-11,7]$, so here
$\gamma(\hat x)$ and $\gamma(\hat y)$ are not in generic positions.
 Minimal upper bounds are
$\hat z = a + b \varepsilon_1 + c \varepsilon_2 + d \varepsilon_3 +  6 \varepsilon_U,$
where $a+b+c-d=-3$, $-2 \leq a \leq 1$, $-6 \leq b \leq 1$, $1 \leq c \leq 2$ and $1 \leq d \leq d$. 
Again, as in Example \ref{ex_u_2}, minimizing the concretization of $\gamma_z$ by minimizing the 
absolute value of $b$, $c$ and $d$ does not give an admissible solution (when $b = 0$, $c=1$, $d=1$, 
relation $a+b+c-d=-3$ gives $a=-3$ which is not admissible). 
But, as the minimal concretization 
for $\hat z$ is in any case $\gamma(\hat x) \cup \gamma(\hat y)$, we can try to impose it. This 
gives $a = mid (\gamma(\hat x) \cup \gamma(\hat y)) = -2$, and an additional relation 
$-2+|b|+c+d+6=7$.\\
All solutions 
$\hat z = -2 + b \varepsilon_1 + c \varepsilon_2 + d \varepsilon_3 +  6 \varepsilon_U,$
with  $-6 \leq b \leq 1$, $1 \leq c \leq 2$, $1 \leq d \leq 2$, $|b|+c+d=3$ and $b+c-d=-1$ 
are minimal upper bounds with minimum concretization, and there are an infinite number of them, we
can choose for example $\hat z = -2 - \varepsilon_1 + \varepsilon_2 + \varepsilon_3 +  
6 \varepsilon_U,$ or $\hat z = -2 + \varepsilon_2 + 2 \varepsilon_3 +  
6 \varepsilon_U,$, etc.
\end{exem}

\begin{prop}
\label{join2}
Let $\hat x, \hat y \in \RA_1$, such that (\ref{exist_mub}) holds. 
If $\gamma(\hat x)$ and 
$\gamma(\hat y)$ are in generic positions, 
then $\hat z$ defined by (\ref{beta_opt}) and
\[ \left\{ 
\begin{array}{l} 
\alpha_0^z = mid (\gamma(\hat x) \cup \gamma(\hat y)) \\
\alpha_i^z = \argmin{\alpha^x_i}{\alpha^y_i}, \; \forall i \geq 1 
\end{array}
 \right.\] is the unique minimal upper bound of $\hat x$ and $\hat y$ whose concretization is 
the union of the concretization of $\hat x$ and $\hat y$.

If $\gamma(\hat x)$ and $\gamma(\hat y)$ are not in generic positions and 
$\gamma(\hat x) \subset \gamma(\hat y)$ (we get symmetric properties when 
$\gamma(\hat y) \subset \gamma(\hat x)$), then all $\hat z$ satisfying (\ref{beta_opt}), 
(\ref{minmax}), and
\[ \alpha_0^z = mid (\gamma(\hat x) \cup \gamma(\hat y)) = \alpha_0^y \]
\begin{equation}
\alpha_i^y \leq \alpha_i^z \leq 0 \mbox{ or } 0 \leq  \alpha_i^z  \leq \alpha_i^y , \; \forall i \geq 1
\label{min}
\end{equation}
are minimal upper bounds with concretization the union of the concretization 
of $\hat x$ and $\hat y$.
\end{prop}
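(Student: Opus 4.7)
The plan is to leverage Proposition \ref{join1}, which tells us that every minimal upper bound $\hat{z}$ of $\hat{x}$ and $\hat{y}$ has $\beta^z$ pinned by (\ref{beta_opt}) and coefficients $\alpha_i^z$ constrained by (\ref{minmax}). Since $\gamma(\hat{z})$ is the interval centered at $\alpha_0^z$ of half-width $\normL{\hat{z}} + \beta^z$ and $\beta^z$ is already fixed, minimizing the concretization reduces to choosing $\alpha_0^z$ and the $\alpha_i^z$ so that $\gamma(\hat{z})$ shrinks to the smallest interval still containing $\gamma(\hat{x}) \cup \gamma(\hat{y})$, this union being a hard lower bound on $\gamma(\hat{z})$ via Lemma \ref{conc}.

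The technical heart of the argument is the pointwise identity stating that the element of minimum absolute value in $[a \wedge b,\, a \vee b]$, added to $\frac{1}{2}|a-b|$, equals $\frac{1}{2}(|a|+|b|)$ for any reals $a, b$ (the same-sign and opposite-sign cases both check trivially). Summing this identity over $i \geq 1$ and combining with (\ref{beta_opt}) gives, for the candidate $\alpha_i^z = \argmin{\alpha_i^x}{\alpha_i^y}$,
\[
\normL{\hat{z}} + \beta^z = \frac{1}{2}\left(\normL{\hat{x}} + \beta^x + \normL{\hat{y}} + \beta^y + |\alpha_0^x - \alpha_0^y|\right).
\]
A short case split using the definition of generic position (proper overlap versus inclusion with a shared endpoint) then shows that this right-hand side equals exactly the half-width of $\gamma(\hat{x}) \cup \gamma(\hat{y})$, so setting $\alpha_0^z$ to the midpoint of that union and the $\alpha_i^z$ to the coordinate-wise argmin delivers $\gamma(\hat{z}) = \gamma(\hat{x}) \cup \gamma(\hat{y})$; Proposition \ref{join1} confirms that this $\hat{z}$ is a mub once we verify $\normA{\hat{z} - \hat{x}} = \beta^z - \beta^x$ and $\normA{\hat{z} - \hat{y}} = \beta^z - \beta^y$, which is a direct computation.

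For uniqueness in the generic case, any competing mub with the same concretization must share the midpoint of the union (hence the same $\alpha_0^z$) and must have $\normL{\hat{z}}$ equal to the half-width minus $\beta^z$; since that target coincides with the minimum of $\sum_{i \geq 1}|\alpha_i^z|$ under (\ref{minmax}) and this minimum is uniquely attained coordinate-wise at the argmin, the $\alpha_i^z$ are forced. In the non-generic case where $\gamma(\hat{x})$ is strictly contained in $\gamma(\hat{y})$ with no shared endpoint, the union equals $\gamma(\hat{y})$ so $\alpha_0^z = \alpha_0^y$ is forced, but the required $\normL{\hat{z}}$ exceeds the smallest value of $\sum |\alpha_i^z|$ permitted by (\ref{minmax}) by a positive slack; admissible mubs with union concretization are exactly those satisfying (\ref{minmax}), (\ref{min}) and the width equation, yielding in general infinitely many such $\hat{z}$ as illustrated by Example \ref{ex_u_4}.

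The main obstacle I expect lies in the case analysis of the second paragraph: matching the algebraic quantity $\normL{\hat{z}} + \beta^z$ with the geometric half-width of $\gamma(\hat{x}) \cup \gamma(\hat{y})$ genuinely needs the generic-position hypothesis and fails precisely in the non-generic regime, where the argmin choice alone cannot account for the union's width and additional ``centering slack'' has to be absorbed into certain $\alpha_i^z$, producing the richer solution set described in Part 2.
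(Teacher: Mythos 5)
Your proposal is correct and takes essentially the same route as the paper's: both pin $\beta^z$ and the coordinate constraints via Proposition \ref{join1}, use the identity $|\alpha_i^z|=\frac12\left(|\alpha_i^x|+|\alpha_i^y|-|\alpha_i^x-\alpha_i^y|\right)$ for the argmin choice (the paper's equation (\ref{prop2}); note your statement of this identity should refer to the \emph{absolute value} of the minimizing element, not the element itself), and match $\normL{\hat z}+\beta^z$ against the half-width of $\gamma(\hat x)\cup\gamma(\hat y)$, with the generic-position hypothesis entering exactly where you say it does. Your explicit coordinate-wise uniqueness argument and your description of the non-generic solution set are sound and consistent with the paper's treatment.
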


Indeed, the solution with minimal concretization is particularly 
interesting when computing fixpoint in loops, by preserving the stability of the concretizations of 
variables values in iterates, as we will see in Theorem \ref{thm}.

\begin{sketch}
We want to find $\alpha_i^z$ such that the concretization is the smallest possible, with the above 
conditions still holding. For that, we have to minimize $|\alpha_i^z|$ with constraints (\ref{minmax}), 
we thus set \[ \alpha_i^z = \argmin{\alpha^x_i}{\alpha^y_i}, \; \forall i \geq 1 \]

For this choice of the $\alpha_i^z$, then for all $i \geq 1$, we can prove the two following properties:

\begin{equation} 
\label{prop1}
|\alpha_i^z - \alpha_i^x| - |\alpha_i^z - \alpha_i^y| = |\alpha_i^x| - |\alpha_i^y|,
\end{equation}
\begin{equation}
\label{prop2}
|\alpha_i^z| = \frac12 (|\alpha_i^x|+|\alpha_i^y|-|\alpha_i^y-\alpha_i^x|).
\end{equation}
We still have to define $\alpha_0^z$: let us now, using $\normA{\hat x-\hat z} = \beta^z - \beta^x$ and 
$\normA{\hat y-\hat z} = \beta^z - \beta^y$, write 
$\normA{\hat x-\hat z} - \normA{\hat y-\hat z} = \beta^y - \beta^x$ and express it using property (\ref{prop1}). \\
When $\alpha_0^x \leq \alpha_0^z \leq \alpha_0^y$, we can then show that it can be rewritten as
\begin{equation}
\label{alp0} 
\alpha_0^z = \frac12 (\alpha_0^x + \alpha_0^y + \sum_{i \geq 1} |\alpha_i^y| + \beta^y - 
\sum_{i \geq 1} |\alpha_i^x|  - \beta^x ),
\end{equation}
and, using (\ref{prop2}), that 
\[ \alpha_0^z + \sum_{i \geq 1} |\alpha_i^z| + \beta^z = \alpha_0^y + \sum_{i \geq 1} |\alpha_i^y| + 
\beta^y. \]
When $\alpha_0^x \leq \alpha_0^y$, and $\gamma(\hat x)$ and $\gamma(\hat y)$ are in generic positions,
then $\alpha_0^x - \sum_{i \leq 1} |\alpha_i^x| - \beta^x$ is the minimum of 
$\gamma(\hat x) \cup \gamma(\hat y)$, 
and $\alpha_0^y + \sum_{i \leq 1} |\alpha_i^y| + \beta^y$ its maximum. So $\alpha_0^z$ is indeed the center of 
$\gamma(\hat x) \cup \gamma(\hat y)$, and the concretization of $\hat z$ thus defined is the minimal 
possible, that is $\gamma(\hat x) \cup \gamma(\hat y)$. 
The proof is of course symmetric when $\alpha_0^y \leq \alpha_0^x$.

Now if $\gamma(\hat x)$ and $\gamma(\hat y)$ are not in generic positions, and for instance here 
$\gamma(\hat x) \subset \gamma(\hat y)$, then we can use $\alpha_0^y + \sum_{i \geq 1} |\alpha_i^y| + \beta^y > 
\alpha_0^x + \sum_{i \geq 1} |\alpha_i^x| + \beta^x $in (\ref{alp0}) to deduce $\alpha_0^z > \alpha_0^x$ and 
$\alpha_0^x - \sum_{i \geq 1} |\alpha_i^x| - \beta^x > \alpha_0^y - \sum_{i \geq 1} |\alpha_i^y| - \beta^y$ 
in (\ref{alp0}) to deduce $\alpha_0^z > \alpha_0^y$, which is not admissible.
So $\hat z$ given by $\alpha_i^z = \argmin{\alpha^x_i}{\alpha^y_i}$ is not a minimal upper bound in the non 
generic case. In order to have minimal concretization, we must have 
$\alpha_0^z = mid (\gamma(\hat x) \cup \gamma(\hat y)) = \alpha_0^y$, and 
\[ \alpha_0^z + \sum_{i \geq 1} |\alpha_i^z| + \beta^z = \alpha_0^y + \sum_{i \geq 1} |\alpha_i^y| + \beta^y, \]
which can be rewritten \[\normL{z} = \normL{y} + \beta^y - \beta^z = \normL{y} - \normL{y-z}, \]
equivalent to (\ref{min}). The proof and conditions are of course symmetric when 
$\gamma(\hat y) \subset \gamma(\hat x)$.
%- when $\alpha_0^y \leq \alpha_0^z \leq \alpha_0^x$, it can be rewritten as
%\[ \alpha_0^z = \frac12(\alpha_0^x + \alpha_0^y + \sum_{i \geq 1} |\alpha_i^x| - 
%\sum_{i \geq 1} |\alpha_i^y|) \]
%Finally, we compute the minimum and maximum values of the concretization of $\hat z$ for these 
%choices of the $\alpha_i^z$: \\
%- when $\alpha_0^x \leq \alpha_0^z \leq \alpha_0^y$, we can show that
%\[ \alpha_0^z + \sum_{i \geq 1} |\alpha_i^z| + \beta^z = \alpha_0^y + \sum_{i \geq 1} |\alpha_i^y| + 
%\beta^y, \]
%\[ \alpha_0^z - \sum_{i \geq 1} |\alpha_i^z| - \beta^z = \alpha_0^x - \sum_{i \geq 1} |\alpha_i^x| - 
%\beta^x. \]
%- when $\alpha_0^y \leq \alpha_0^z \leq \alpha_0^x$ :
%\[ \alpha_0^z + \sum_{i \geq 1} |\alpha_i^z| + \beta^z = \alpha_0^x + \sum_{i \geq 1} |\alpha_i^x| +
%\beta^x, \]
%\[ \alpha_0^z - \sum_{i \geq 1} |\alpha_i^z| - \beta^z = \alpha_0^y - \sum_{i \geq 1} |\alpha_i^y| - 
%\beta^y, \]
$\Box$
\end{sketch}

Note that, as we will show in Section \ref{riesz}, the join operator thus defined is not associative 
in the non generic case. What's more, as we will see, the affine form obtained by two successive 
join operations may not even be a minimal upper bound of the three joined affine forms. In Section, 
\ref{riesz}, we will thus introduce a first (associative) widening\footnote{This is a slight abuse of notation here: we do not have in general the
finite chain property, but a similar one, in our framework (convergence
in a finite time, in finite arithmetic).} of this join operation, which we 
will use to define a partial order.

%-----------------------------------------------------------------------------
\subsubsection{The meet operation}

\label{meetoperation}

If $(\RA_1,\leq)$ admitted binary least upper bounds, then we would have
a Riesz space, for which $x \cap y$ would be defined as $-\left((-x)\cup
(-y)\right)$. Here, we have a different formula, linking $\cap$ with $\cup$
in some interesting cases. Intersections will produce {\em negative} $\beta$
coefficients, where unions were producing {\em positive} $\beta$ coefficients.

\begin{lemm}
\label{meetbeta}
For all $\hat{x}$, $\hat{y}$ in $\RA_1$, there exist {\em maximal lower
bounds} (or mlb) $\hat{z}$ of $\hat{x}$ and $\hat{y}$
if and only if
\begin{equation}
\label{exist_mlb}
\normA{\hat{y}-\hat{x}} \geq | \beta^y - \beta^x| .
\end{equation}

They then all satisfy (for all $i\geq 1$):
$$\beta^z= \frac{1}{2}\left(\beta^x+\beta^y-\normA{y-x}\right)$$
$$\alpha^x_i \wedge \alpha^y_i \leq \alpha^z_i \leq \alpha_i^x \vee \alpha^y_i$$
\end{lemm}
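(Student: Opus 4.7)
The plan is to mirror closely the proof of Proposition \ref{join1}, dualizing the inequalities: where the join used the lower bound $\beta^z \geq \frac{1}{2}(\normA{\hat{x}-\hat{y}}+\beta^x+\beta^y)$ coming from a triangle inequality, the meet will use an analogous upper bound obtained the same way. First I would establish an analogue of Lemma \ref{mubs}: $\hat{z}$ is a mlb of $\hat{x}$ and $\hat{y}$ iff it is a lower bound with $\beta^z$ maximal among all lower bounds. The argument is identical, using antisymmetry of $\leq$ (Definition and Lemma \ref{perturborder}): any lower bound $\hat{t}$ of $\hat{x}$ and $\hat{y}$ with $\hat{z}\leq\hat{t}$ forces $\beta^t\geq\beta^z$, hence $\beta^t=\beta^z$ by maximality, and then $\normA{\hat{t}-\hat{z}}=0$ gives $\hat{t}=\hat{z}$.

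Next I would unfold $\hat{z}\leq\hat{x}$ and $\hat{z}\leq\hat{y}$ and apply the triangular inequality to $\normA{\cdot}$:
$$\normA{\hat{x}-\hat{y}}\;\leq\;\normA{\hat{x}-\hat{z}}+\normA{\hat{z}-\hat{y}}\;\leq\;(\beta^x-\beta^z)+(\beta^y-\beta^z),$$
which rearranges to $\beta^z\leq\frac{1}{2}(\beta^x+\beta^y-\normA{\hat{y}-\hat{x}})$. By the characterization above, mlbs are exactly those lower bounds saturating this bound, giving the announced formula for $\beta^z$. Admissibility of this value (namely $\beta^x-\beta^z\geq 0$ and $\beta^y-\beta^z\geq 0$) then translates term by term into $\normA{\hat{y}-\hat{x}}\geq\beta^y-\beta^x$ and $\normA{\hat{y}-\hat{x}}\geq\beta^x-\beta^y$, i.e. exactly condition (\ref{exist_mlb}). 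This settles the only-if direction and the formula for $\beta^z$.

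For the component-wise constraint, I would note that once $\beta^z$ is maximal, the chain of inequalities above becomes an equality, so $\normA{\hat{x}-\hat{z}}+\normA{\hat{z}-\hat{y}}=\normA{\hat{x}-\hat{y}}$, and $\normA{\hat{x}-\hat{z}}=\beta^x-\beta^z$, $\normA{\hat{y}-\hat{z}}=\beta^y-\beta^z$. Since $\normA{\cdot}$ is a weighted $\ell_1$ norm on the $\alpha$-coordinates, the triangle equality $\|a-c\|+\|c-b\|=\|a-b\|$ forces coordinate-wise $c_i$ to lie between $a_i$ and $b_i$, which gives the inequality $\alpha^x_i\wedge\alpha^y_i\leq\alpha^z_i\leq\alpha^x_i\vee\alpha^y_i$.

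For the converse (if direction), I would construct an explicit mlb when (\ref{exist_mlb}) holds. The natural candidate is the convex combination $\alpha^z_i = t\,\alpha^x_i+(1-t)\alpha^y_i$ for a carefully chosen $t\in[0,1]$: this automatically satisfies the component-wise constraint and yields $\normA{\hat{x}-\hat{z}}=(1-t)\normA{\hat{x}-\hat{y}}$ and $\normA{\hat{y}-\hat{z}}=t\normA{\hat{x}-\hat{y}}$. Setting $\beta^z$ to the maximal value and equating $\normA{\hat{x}-\hat{z}}=\beta^x-\beta^z$ gives $t=\tfrac{1}{2}+\tfrac{\beta^y-\beta^x}{2\normA{\hat{x}-\hat{y}}}$, which lies in $[0,1]$ precisely under (\ref{exist_mlb}). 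When $\normA{\hat{x}-\hat{y}}=0$, condition (\ref{exist_mlb}) forces $\hat{x}=\hat{y}$ and $\hat{z}=\hat{x}$ trivially works. The main obstacle here is ensuring that the construction simultaneously satisfies both bound conditions and the saturation equations, which is exactly the role played by the IVT-style argument in Proposition \ref{join1}; the dual inequality $|\beta^y-\beta^x|\leq\normA{\hat{x}-\hat{y}}$ makes the linear parameter $t$ admissible, closing the proof.
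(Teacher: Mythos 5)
Your proof is correct and follows essentially the same route as the paper: bound $\beta^z$ from above by summing the two lower-bound inequalities and applying the triangle inequality, characterize mlbs as the lower bounds saturating that bound, read off the coordinate-wise betweenness from equality in the $\ell_1$ triangle inequality, and translate admissibility of the optimal $\beta^z$ into condition (\ref{exist_mlb}). The only (welcome) difference is that for existence you replace the paper's intermediate-value argument (borrowed from Proposition \ref{join1}) by an explicit convex combination $\alpha^z_i = t\alpha^x_i+(1-t)\alpha^y_i$ with a closed-form $t$, which is a clean and valid instantiation of the same idea.
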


\begin{sketch}
Being a lower bound of $x$ and $y$ means:
$$\begin{array}{rcl}
\normA{z-x} & \leq & \beta^x-\beta^z \\
\normA{z-y} & \leq & \beta^y-\beta^z \\
\end{array}$$
Summing the two inequalities, and using the triangular inequality:
$$\begin{array}{rcl}
\normA{y-x} & \leq & \normA{z-y}+\normA{z-x} \\
& \leq & \beta^x+\beta^y-2\beta^z \\
\end{array}$$
Hence 
$\beta^z \leq \frac{1}{2}\left(\beta^x+\beta^y-\normA{y-x}\right)$
giving an upper bound. As we want a maximal $z$, the natural question is
whether we can reach this bound. This is the case when the triangular
inequality for $\ell_1$ norm is an equality, which is the case when
$\alpha^x_i \wedge \alpha^y_i \leq \alpha^z_i \leq \alpha_i^x \vee \alpha^y_i$.
A solution exists to these constraints only if (\ref{exist_mlb})
is satisfied, as for the proof of Proposition \ref{join1}. $\Box$
\end{sketch}

Contrarily to the join operators, we cannot in general impose (even
in generic position) for a 
mlb $\hat{z}$ to have a given concretization, such as $\gamma(\hat{x})
\cap \gamma(\hat{y})$, or even a smaller value, such as the interval
that contains all values that $\hat{x}(t_1,\ldots,u_x)$ and 
$\hat{x}(t_1,\ldots,u_y)$ share for some $t_1,\ldots, u_x,u_y \in [-1,1]$. 

\begin{exem}
Consider $\hat{x}=1+\varepsilon_1-2\varepsilon_2 \subseteq [-2,4]$
and $\hat{y}=2+2\varepsilon_1+\varepsilon_2 \subseteq [-1,5]$. They are
in generic position. Then 
$\hat{z}$ is a mlb with concretization $\gamma(\hat{z})=\gamma(\hat{x})
\cap \gamma(\hat{y})=[-1,4]$ if and only if
%$$\begin{array}{rcl}
$\hat{z} = 1.5+a \varepsilon_1+b \varepsilon_2-\frac{5}{2} \varepsilon_U$, 
$a+b = 1$, 
$a+|b| = 5$, 
$-2 \leq b \leq 1$ and
$1 \leq a \leq 2$.
%\end{array}$$
Suppose $b$ is positive, then we want to have $a+b=5$ and $a+b=1$, which
is impossible. So $b$ is negative and we want to solve $a-b=5$ and $a+b=1$,
therefore, $a=3$ and $b=-2$, which is impossible because we precisely
asked $b$ for being positive (and $a$ to be less than 2).
\end{exem}

In some cases though, such mlb operators exist and we can give an explicit
formula:

\begin{lemm}
\label{argmax}
In case $\hat{x}$ and $\hat{y}$ are in generic positions, (\ref{exist_mlb}) is satisfied, and
$\alpha^x_i \alpha^y_i \geq 0$ for all $i \geq 1$, there exists
a maximal lower bound $\hat{z}$ with $\gamma(\hat{z})=\gamma(\hat{x})
\cap \gamma(\hat{y})$, given by the formulas:
\begin{itemize}
\item $\alpha^z_0=mid\left( \gamma(\hat{x})\cap \gamma(\hat{y}) \right)$
\item $\alpha^z_i=\argmax{\alpha^x_i}{\alpha^y_i}$ for all $i \geq 1$
\item $\beta^z= \frac{1}{2}\left(\beta^x+\beta^y-\normA{y-x}\right)$
\end{itemize}
In this case, we have:
\begin{equation}
\label{spectral}
x\cap y+x\cup y=x+y
\end{equation}
\end{lemm}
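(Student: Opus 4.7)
The plan is to verify in turn that $\hat{z}$ is a lower bound of $\hat{x}$ and $\hat{y}$, that it is maximal, that its concretization is $\gamma(\hat{x})\cap\gamma(\hat{y})$, and finally the spectral identity (\ref{spectral}). Maximality is essentially free: Lemma \ref{meetbeta} says $\beta^z=\frac{1}{2}(\beta^x+\beta^y-\normA{\hat{y}-\hat{x}})$ is the largest $\beta$-coefficient any lower bound may have, and (\ref{exist_mlb}) guarantees $\beta^z\geq 0$; so once $\hat{z}$ is known to be a lower bound, any $\hat{z}'$ with $\hat{z}\leq\hat{z}'\leq\hat{x},\hat{y}$ is forced to satisfy $\beta^{z'}=\beta^z$, hence $\normA{\hat{z}'-\hat{z}}=0$ and $\hat{z}'=\hat{z}$.

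The real work lies in the inequality $\normA{\hat{z}-\hat{x}}\leq\beta^x-\beta^z$ (and symmetrically in $\hat{y}$). I would first exploit $\alpha^x_i\alpha^y_i\geq 0$: with $\alpha^z_i=\argmax{\alpha^x_i}{\alpha^y_i}$ one has
\[ |\alpha^z_i-\alpha^x_i|+|\alpha^z_i-\alpha^y_i|=|\alpha^x_i-\alpha^y_i|, \]
and the generic-position hypothesis forces $\alpha^z_0=mid(\gamma(\hat{x})\cap\gamma(\hat{y}))$ to lie between $\alpha^x_0$ and $\alpha^y_0$, so the same additive identity holds at $i=0$. Summing gives $\normA{\hat{z}-\hat{x}}+\normA{\hat{z}-\hat{y}}=\normA{\hat{x}-\hat{y}}$, whose right-hand side equals $(\beta^x-\beta^z)+(\beta^y-\beta^z)$ by the definition of $\beta^z$. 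The subtle step is to split this sum into its two halves: using the same-sign identity $2\normL{\hat{z}-\hat{x}}=\normL{\hat{x}-\hat{y}}+\normL{\hat{y}}-\normL{\hat{x}}$ (verified term by term, since $\max(|a|,|b|)-|a|$ reduces to $(|b|-|a|+||b|-|a||)/2$ when $a$ and $b$ share a sign) together with the analogous expansion of $|\alpha^z_0-\alpha^x_0|$ in terms of the endpoints of $\gamma(\hat{x})$ and $\gamma(\hat{y})$, a direct substitution yields $2\normA{\hat{z}-\hat{x}}=\beta^x-\beta^y+\normA{\hat{x}-\hat{y}}$, i.e. the desired inequality with equality.

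For the concretization, note that the centre of $\gamma(\hat{z})$ is $\alpha^z_0=mid(\gamma(\hat{x})\cap\gamma(\hat{y}))$ by construction, and the same-sign hypothesis collapses $\normL{\hat{z}}=\sum_i\max(|\alpha^x_i|,|\alpha^y_i|)=\frac{1}{2}(\normL{\hat{x}}+\normL{\hat{y}}+\normL{\hat{x}-\hat{y}})$. Substituting the chosen $\beta^z$, and rewriting $\normL{\hat{x}}+\beta^x$ and $\normL{\hat{y}}+\beta^y$ as the half-widths of $\gamma(\hat{x}),\gamma(\hat{y})$, one reduces $\normL{\hat{z}}+\beta^z$ to the half-width of the overlap interval, so $\gamma(\hat{z})=\gamma(\hat{x})\cap\gamma(\hat{y})$.

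Finally, for (\ref{spectral}) I would compare $\hat{z}_\cap=\hat{z}$ with the mub $\hat{z}_\cup$ delivered by Proposition \ref{join2}, componentwise: the $\beta$-parts sum to $\beta^x+\beta^y$ because the two occurrences of $\pm\normA{\hat{x}-\hat{y}}$ cancel; for $i\geq 1$ the same-sign hypothesis makes $\argmin{\alpha^x_i}{\alpha^y_i}+\argmax{\alpha^x_i}{\alpha^y_i}=\alpha^x_i+\alpha^y_i$; and for $i=0$, inspecting the endpoints under the generic-position hypothesis yields $mid(\gamma(\hat{x})\cup\gamma(\hat{y}))+mid(\gamma(\hat{x})\cap\gamma(\hat{y}))=\alpha^x_0+\alpha^y_0$. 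The principal obstacle throughout is the individual (not merely summed) lower-bound inequality of the second step, which forces one to invoke both the generic-position and same-sign hypotheses simultaneously.
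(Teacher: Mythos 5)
Your proof is correct, but it follows a genuinely different route from the paper's. The paper never verifies $\hat{z}\leq\hat{x}$ by a direct norm estimate: instead it feeds $\hat{z}$ back into the join formula of Proposition \ref{join2}, uses the composition identities $\argmin{\argmax{u}{v}}{u}=\argmin{u}{v}$ (valid for $uv\geq 0$) together with the concretization equalities to conclude $\hat{z}\cup\hat{x}=\hat{x}$ and $\hat{z}\cup\hat{y}=\hat{y}$, which yields the lower-bound property for free, and then invokes Lemma \ref{meetbeta} and the mlb analogue of Lemma \ref{necmin} for maximality. Your argument is the more elementary and self-contained one: you establish the exact identity $2\normA{\hat{z}-\hat{x}}=\beta^x-\beta^y+\normA{\hat{x}-\hat{y}}$ by a term-by-term computation, which makes visible precisely where each hypothesis enters (the same-sign condition for the terms $i\geq 1$, generic position for the $i=0$ term via the betweenness of $mid(\gamma(\hat{x})\cap\gamma(\hat{y}))$), and it does not depend on the correctness of Proposition \ref{join2}; the price is somewhat heavier bookkeeping with interval endpoints. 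You also prove the identity (\ref{spectral}) explicitly by componentwise comparison with the mub of Proposition \ref{join2}, which the paper's proof leaves implicit. A minor point common to both arguments: the statement tacitly assumes $\gamma(\hat{x})\cap\gamma(\hat{y})\neq\emptyset$, which neither condition (\ref{exist_mlb}) nor generic position guarantees; this is worth flagging as an implicit hypothesis rather than a gap in your reasoning.
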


\begin{proof}
The formula for $\beta^z$ is given by Lemma \ref{meetbeta}. The fact
that the concretization of $\hat{z}$ is $\gamma(\hat{x})\cap \gamma(\hat{y})$
implies the formula for $\alpha^z_0$. 

The formulas for $\alpha^z_i$, $i \geq 1$ can be checked easily as follows:
as $\gamma(\hat{x})$ and $\gamma(\hat{y})$ are in generic positions, 
$\gamma(\hat{x}) \cap \gamma(\hat{y})$ and $\gamma(\hat{x})$ (similarly
with $\gamma(\hat{y})$) are in generic positions. Thus we can use the
formula of Proposition \ref{join2} for the join operator, to compute
$z \cup x$ and $z\cup y$. 
It is easily seen now that for $uv \geq 0$, and $w$,
$$\argmin{\argmax{u}{v}}{u}=\argmin{u}{v}$$  $$\argmin{\argmax{u}{v}}{v}=
\argmin{u}{v}$$
Hence $\alpha^{z\cup y}_i=\alpha^y_i$ and $\alpha^{z\cup x}_i=\alpha^x_i$
for all $i\geq 1$. 

Furthermore, $\gamma(\hat{z}\cup \hat{y})=\gamma(\hat{y})$ and
$\gamma(\hat{z}\cup \hat{x})=\gamma(\hat{x})$ hence $\alpha^{z \cup y}=
\alpha^y$ and $\alpha^{z \cup x}=\alpha^x$. Finally, again because of this
equality and concretizations, and that all coefficients but $\beta^{z \cup x}$
(respectively $\beta^{z \cup y}$) have been shown equal to the ones
of $\hat{x}$ (respectively $\hat{y}$), we have necessarily that
$\beta^{z \cup x}=\beta^x$ (respectively $\beta^{z \cup y}=\beta^y$). 

Therefore $x=z\cup x \geq z$ and $y=z\cup y \geq z$ so $z$ is a lower
bound of $x$ and $y$. Because of the value of $\beta^z$, by Lemma
\ref{meetbeta} and \ref{necmin} (adapted to mlbs), $z$ is an mlb (with
the right concretization). $\Box$
\end{proof}

\subsection{Quasi bounded completeness}
\label{riesz}

We prove that we have almost bounded completeness of $\RA_1$. Unfortunately,
as shown in Example \ref{nonassoc}, this is barely usable in practice,
and we resort to a useful sub-structure of $\RA_1$ in Section
\ref{widen} (in particular, with a view to Section \ref{convsection}).

\begin{prop}
$(\RA_1,\leq)$ is a quasi bounded-complete partial order (or is quasi-``Dedekind-complete''), meaning that any bounded subset $A$ of $\RA_1$ such that
for all $\hat{x}$, $\hat{y}$ in $A$
\begin{equation}
\label{quasibounded}
\normA{\hat{x}-\hat{y}} \geq |\beta^x-\beta^y|
\end{equation}
has a
minimal upper bound in $\RA_1$.
\end{prop}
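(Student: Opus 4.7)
The plan is to apply Zorn's lemma to the set $U$ of upper bounds of $A$ in $\RA_1$, ordered by the restriction of $\leq$; any minimal element of $U$ will then be a minimal upper bound of $A$. First I would note that $U$ is non-empty since $A$ is bounded. The key quantitative observation is that for any $\hat z \in U$ and any pair $\hat x, \hat y \in A$, the triangle inequality combined with $\hat x, \hat y \leq \hat z$ yields
\[ \normA{\hat x - \hat y} \;\leq\; \normA{\hat x - \hat z} + \normA{\hat y - \hat z} \;\leq\; 2\beta^z - \beta^x - \beta^y, \]
so $\beta^z \geq \tfrac{1}{2}(\normA{\hat x - \hat y} + \beta^x + \beta^y)$. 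Fixing any pair from $A$ thus produces a finite lower bound for $\beta^{\hat z}$ over $\hat z \in U$; the hypothesis (\ref{quasibounded}) is precisely the pairwise version of (\ref{exist_mub}) that renders this lower bound compatible with the existence of upper bounds.

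To verify the Zorn hypothesis, I would take an arbitrary chain $C \subseteq U$ and build a lower bound in $U$. Let $\beta^\star = \inf_{\hat z \in C} \beta^{\hat z}$, finite by the previous observation. Pick any sequence $(\hat w_n)$ in $C$ with $\beta^{\hat w_n} \to \beta^\star$, and let $\hat z_n$ be the $\leq$-smallest of $\hat w_1,\ldots,\hat w_n$, well-defined since $C$ is totally ordered. Then $(\hat z_n)$ is monotonically decreasing with $\beta^{\hat z_n} \to \beta^\star$. For $m \geq n$, $\hat z_m \leq \hat z_n$ gives $\normA{\hat z_m - \hat z_n} \leq \beta^{\hat z_n} - \beta^{\hat z_m}$, whose right-hand side tends to $0$; so $(\hat z_n)$ is Cauchy in the $\ell_1$ norm and, by Proposition \ref{orderedBanach}, converges to some $\hat z^\star \in \RA_1$. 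Closedness of $\leq$ (also from Proposition \ref{orderedBanach}) yields $\hat x \leq \hat z^\star$ for every $\hat x \in A$, so $\hat z^\star \in U$, and similarly $\hat z^\star \leq \hat z_n$ for every $n$.

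The subtle step is showing $\hat z^\star \leq \hat z$ for every $\hat z \in C$, and here I would exploit totality of $C$ via a case split. Given $\hat z \in C$, either (a) there exists $n$ with $\hat z \geq \hat z_n$, in which case $\hat z \geq \hat z_m$ for all $m \geq n$ by transitivity, so by closedness $\hat z \geq \hat z^\star$; or (b) $\hat z \leq \hat z_n$ for every $n$ (the only other alternative in a totally ordered set), so closedness gives $\hat z \leq \hat z^\star$, and since $\hat z \leq \hat z_n$ implies $\beta^{\hat z} \leq \beta^{\hat z_n}$, we obtain $\beta^{\hat z} \leq \beta^\star$. Combined with $\beta^{\hat z} \geq \beta^\star$ by definition of the infimum, we get $\beta^{\hat z} = \beta^{\hat z^\star} = \beta^\star$, whereupon $\normA{\hat z^\star - \hat z} \leq \beta^{\hat z^\star} - \beta^{\hat z} = 0$ forces $\hat z = \hat z^\star$ and in particular $\hat z^\star \leq \hat z$. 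Thus $\hat z^\star$ is a lower bound of $C$ inside $U$, and Zorn's lemma delivers a minimal element of $U$, i.e.\ a minimal upper bound of $A$.
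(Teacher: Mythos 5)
Your argument is correct, but it takes a genuinely different route from the paper's. The paper's sketch is (semi-)constructive: it pins down the smallest admissible perturbation coefficient of an upper bound as the supremum over pairs of $\frac12\left(\normA{a_i-a_j}+\beta^{a_i}+\beta^{a_j}\right)$ (the displayed $\inf$ there should be read as a $\sup$, since each pair gives a lower constraint on $\beta^z$), and then argues as in Proposition~\ref{join1} that hypothesis~(\ref{quasibounded}) makes the residual constraint system on the $\alpha^z_i$ solvable. You instead apply the dual form of Zorn's lemma to the set $U$ of upper bounds, manufacturing a lower bound of each chain by extracting a $\beta$-minimizing decreasing sequence, checking it is Cauchy, and invoking completeness of $\RA_1$ together with closedness of $\leq$ from Proposition~\ref{orderedBanach}; your case split for chain elements below every $\hat z_n$ is exactly what is needed to close the argument, and it goes through. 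This is the order/topology interplay the paper advertises, used in arguably purer form. Two observations. First, you never use hypothesis~(\ref{quasibounded}), so you actually prove the stronger statement that every non-empty bounded subset has a mub; this does not contradict Proposition~\ref{join1}, because when (\ref{exist_mub}) fails for a pair one element already dominates the other and is itself the least upper bound --- only the formula~(\ref{beta_opt}) for $\beta^z$ breaks down. Second, both your proof and the paper's statement tacitly require $A\neq\emptyset$ (you need a point of $A$ to bound $\beta^z$ from below on $U$, and indeed $(\RA_1,\leq)$ has no minimal element). The trade-off is that your route is non-constructive and produces no formula for the mub, whereas the paper's explicit $\beta^z$ is what gets reused in the join and widening operators.
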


\begin{sketch}

Let $z$ be an upper bound of all $a_i \in A$. This means again, for all
pairs $i,j$ that
$$\beta^z \geq \frac{1}{2}\left(\normA{a_i-a_j} + \beta^{a_i}+\beta^{a_j}\right)$$
We can always suppose that $b=\alpha^b_0+\beta^b \varepsilon_U$. As $b$
dominates all $a_i \in A$, we have, using the triangular inequality:
$$\begin{array}{rcl}
\normA{a_i-a_j} & \leq & \normA{a_i-b}+\normA{b-a_j} \\
& \leq & 2\beta^b-\beta^{a_i}-\beta^{a_j} \\
\end{array}$$
so $\beta^z \geq \beta^b$. This means that we can write:
$$\beta^z = \inf_{i,j} \left\{\frac{1}{2}\left(\normA{a_i-a_j} + \beta^{a_i}+\beta^{a_j}\right)\right\}$$
which exists in $\R$. 
Similarly to the proof of Proposition \ref{join1}, condition \ref{quasibounded}
allows to prove existence of a solution to the mub equations.

$\Box$
\end{sketch}

Unfortunately, even if we pick one of the possible join operators,
they are not in general associative operators, which means that even
for countable subsets $A$ of $\RA_1$, according to the iteration strategy 
we choose, we might end up with a non-minimal upper bound.

\begin{exem}
\label{nonassoc}
Take 
\[ \left\{
\begin{array}{ccl} 
\hat x &=& 1 + 2 \varepsilon_1 - \varepsilon_2 + 2 \varepsilon_3 \\
\hat y & = & \varepsilon_1 + \varepsilon_2 + \varepsilon_3 \\
\hat z & = & 5 + \varepsilon_1 - 2 \varepsilon_2
\end{array}
\right.
\]
When computing with one of the possible join operators previously defines, we obtain
\[ \left\{
\begin{array}{ccl} 
(\hat x \cup \hat y) \cup \hat z & = & 2 + \varepsilon_1 + b_1 \varepsilon_2 + 
0.5 (7.5 + a_1 + b_1 + c_1) \varepsilon_U \\
(\hat y \cup \hat z) \cup \hat x & = & 2 + \varepsilon_1 + 5 \varepsilon_U \\
(\hat x \cup \hat z) \cup \hat y & = & 2 + \varepsilon_1 + b_2 \varepsilon_2 + 
(0.5+\beta_2) \varepsilon_3 + 4.5 \varepsilon_U,
\end{array}
\right.
\]
with $a_1-b_1+c_1 = 2.5$, $1 \leq a_1 \leq 2$, $-1 \leq b81 \leq 0$, $1 \leq c_1 \leq 2$, and 
$-1 \leq b_2 \leq 0$.

We see on this example that, in the case where all concretizations are not in generic positions,
 the join operator is not associative. Moreover, the result of two successive 
join operations may not be a minimal upper bound of three affine forms because we do not always 
get the same $\beta$ coefficient (5 when computing $(\hat y \cup \hat z) \cup \hat x$, and 
4.5 when computing $(\hat x \cup \hat z) \cup \hat y)$. Indeed, we have here
$(\hat x \cup \hat z) \cup \hat y \leq (\hat y \cup \hat z) \cup \hat x$ when $-0.5 \leq b_2 \leq 0$.
\end{exem}

We fix this difficulty in next section.

%---------------------------------------------------------------------------------------------
\subsection{A bounded lattice sub-structure}
\label{widen}
In practice, we obtain a stronger sub-structure by using a widening instead 
of the minimal upper bound: 
\begin{deflem}
\label{widen_1}
We define the widening operation $\hat z = \hat x \nabla \hat y$ by
\begin{itemize}
\item $\alpha^z_0=mid\left( \gamma(\hat{x})\cup \gamma(\hat{y}) \right)$
\item $\alpha^z_i=\argmin{\alpha^x_i}{\alpha^y_i}\; $ for all $i \geq 1$,
\item $\beta^z= \sup \gamma(\hat{x}\cup \hat{y}) - \alpha^z_0-\normL{z}$
\end{itemize}
In the case when $\hat x$ and $\hat y$ are in generic positions,
$\nabla$ is the union defined in Section \ref{sec_join}. Otherwise, $\hat x \nabla \hat y$
 has as concretization the union of the concretizations of $\hat x$ and $\hat y$, it 
is an upper bound of $\hat x$ and $\hat z$ (but it is not a minimal upper bound with 
respect to $\leq$, because $\beta^z$ is not minimal).
\end{deflem}

This operator has the advantages of presenting a simple an explicit formulation, a stable 
concretization with respect to the operands, and of being associative.

\begin{deflem} [computational order] 
Let $\ll$ be the binary relation defined by:
$$\hat{x} \ll \hat{y} \Leftrightarrow \hat{x} \nabla \hat{y} =\hat{y}$$
Then, $\ll$ is a partial order.
\end{deflem}

\begin{sketch}
Reflexivity comes from $\hat{x} \nabla \hat{x}=\hat{x}$. Antisymmetry is trivial. Transitivity
comes from the associativity of $\nabla$. $\Box$
\end{sketch}

\begin{lemm}
$\hat{x}\ll\hat{y}$ if and only if:
\begin{itemize}
\item $\gamma(\hat{x}) \subseteq \gamma(\hat{y})$ and
\item for all $i\geq 1$, $0 \leq \alpha^y_i \leq \alpha^x_i$ or
$\alpha^x_i \leq \alpha^y_i \leq 0$
\end{itemize}
\end{lemm}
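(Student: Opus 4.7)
The plan is to unfold $\hat{x}\ll\hat{y}$ as the three equalities $\alpha^{x\nabla y}_0=\alpha^y_0$, $\alpha^{x\nabla y}_i=\alpha^y_i$ for $i\geq 1$, and $\beta^{x\nabla y}=\beta^y$, and then translate each into the two bullet conditions of the lemma, using the key observation (already noted after Definition \& Lemma \ref{widen_1}) that $\gamma(\hat{x}\nabla\hat{y})=\gamma(\hat{x})\cup\gamma(\hat{y})$ as intervals.

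For the direction $\hat{x}\ll\hat{y}\Rightarrow$ both bullets, I would start from $\alpha^y_i=\argmin{\alpha^x_i}{\alpha^y_i}$ for $i\geq 1$. A case split on the signs of $\alpha^x_i$ and $\alpha^y_i$ shows that this equality forces $\alpha^x_i$ and $\alpha^y_i$ to have the same (possibly zero) sign and $|\alpha^y_i|\leq|\alpha^x_i|$: if they had strictly opposite signs the argmin would be $0\neq\alpha^y_i$, and if they shared a sign the argmin would be the one closer to zero. This is exactly the second bullet. For the first bullet, observe that since $\hat{x}\nabla\hat{y}=\hat{y}$, the concretization of $\hat{y}$ equals the concretization of $\hat{x}\nabla\hat{y}$, which by construction is $\gamma(\hat{x})\cup\gamma(\hat{y})$; hence $\gamma(\hat{x})\subseteq\gamma(\hat{y})$.

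For the converse, I compute $\hat{z}=\hat{x}\nabla\hat{y}$ under the two hypotheses and check componentwise that $\hat{z}=\hat{y}$. The sign/magnitude assumption immediately yields $\argmin{\alpha^x_i}{\alpha^y_i}=\alpha^y_i$, so $\alpha^z_i=\alpha^y_i$ for $i\geq 1$, and in particular $\normL{\hat{z}}=\normL{\hat{y}}$. The containment $\gamma(\hat{x})\subseteq\gamma(\hat{y})$ gives $\gamma(\hat{x})\cup\gamma(\hat{y})=\gamma(\hat{y})$, so $\alpha^z_0=mid(\gamma(\hat{y}))=\alpha^y_0$, and
\[ \beta^z=\sup\gamma(\hat{y})-\alpha^y_0-\normL{\hat{y}}=\beta^y \]
using the formula for $\gamma(\hat{y})$ from its definition. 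Hence $\hat{z}=\hat{y}$.

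There is no real obstacle: the proof is essentially bookkeeping, with the only subtle step being the sign-analysis of $\argmin$ in the case where $\alpha^x_i$ and $\alpha^y_i$ straddle zero, which must be handled carefully to rule out the possibility $\alpha^y_i=0\neq\alpha^x_i$ being presented in one of the two disjunctive clauses (it is, vacuously, since $0$ lies in both closed intervals $[0,\alpha^x_i]$ and $[\alpha^x_i,0]$ of the statement).
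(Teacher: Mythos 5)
Your proof is correct and follows essentially the same route as the paper's (terse) sketch: unfold $\hat{x}\nabla\hat{y}=\hat{y}$ coefficientwise, read the $i\geq 1$ equalities as a sign/magnitude condition on the $\argmin$, and get the interval containment from $\gamma(\hat{x}\nabla\hat{y})=\gamma(\hat{x})\cup\gamma(\hat{y})$. You are in fact slightly more careful than the paper, which attributes $\beta^{x\nabla y}=\beta^y$ to the concretization condition alone, whereas (as your converse correctly shows) it also needs $\normL{\hat{x}\nabla\hat{y}}=\normL{\hat{y}}$, i.e.\ the second bullet.
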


\begin{sketch}
The first condition ensures that $\alpha^{x \cup y}_0=\alpha^y_0$ and
$\beta^{x \cup y}=\beta^y$. Take $i \geq 1$. If $\alpha^x_i \alpha^y_i 
\leq 0$ then because $\argmin{\alpha^x_i}{\alpha^y_i}=\alpha^y_i$,
$\alpha^y_1$ has to be zero. Otherwise, this translates precisely to
the second condition.
\end{sketch}

\begin{defi}
We define operation $\hat z = \hat x \Delta \hat y$ by 
\begin{itemize}
\item $\alpha^z_0=mid\left( \gamma(\hat{x})\cap \gamma(\hat{y}) \right)$
\item $\alpha^z_i=\argmax{\alpha^x_i}{\alpha^y_i}$ for all $i \geq 1$, 
if $\alpha^x_i \alpha^y_i \geq 0$, otherwise $\alpha^z_i=0$
\item $\beta^z= \sup \gamma(\hat{x}\cap \hat{y}) - \alpha^z_0-\normL{z}$
\end{itemize}
\end{defi}

\begin{prop}
\label{completelattice}
$(\RA_1,\ll)$ is a bounded complete lattice, with:
\begin{itemize}
\item $\nabla$ being the union,
\item $\Delta$ being the intersection.
\end{itemize}
\end{prop}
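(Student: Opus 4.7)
The plan is to verify in turn that $\nabla$ gives binary joins, that $\Delta$ gives binary meets, and then that any subset of $\RA_1$ bounded above in $\ll$ admits a least upper bound.

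For the first step, I would apply the characterization lemma just proved: to show that $\hat{z} = \hat{x} \nabla \hat{y}$ is the binary join, one needs $\gamma(\hat{x}) \subseteq \gamma(\hat{z})$ (and symmetrically for $\hat{y}$) together with the sign-and-magnitude condition on the $\alpha_i$. Containment is immediate since $\gamma(\hat{z}) = \gamma(\hat{x}) \cup \gamma(\hat{y})$ by construction. For the coefficients, when $\alpha^x_i$ and $\alpha^y_i$ share a sign, $\alpha^z_i = \argmin{\alpha^x_i}{\alpha^y_i}$ equals the one of smaller absolute value, so trivially lies between $0$ and each of $\alpha^x_i, \alpha^y_i$; when their signs differ, the minimum-absolute-value element of the segment is $0$, and the condition is automatic. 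For leastness, any other upper bound $\hat{w}$ satisfies $\gamma(\hat{x}) \cup \gamma(\hat{y}) \subseteq \gamma(\hat{w})$ and must have each $\alpha^w_i$ lying between $0$ and $\alpha^x_i$ (same sign), and between $0$ and $\alpha^y_i$ (same sign); a short case analysis on the signs of $\alpha^x_i, \alpha^y_i$ then forces $\alpha^w_i$ into the sub-segment between $0$ and $\alpha^z_i$, so $\hat{z} \ll \hat{w}$.

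The meet case for $\Delta$ is entirely symmetric. Here $\gamma(\hat{x} \Delta \hat{y}) = \gamma(\hat{x}) \cap \gamma(\hat{y})$ by construction; taking $\argmax{\alpha^x_i}{\alpha^y_i}$ when signs agree (and $0$ otherwise) gives exactly the coefficients that maximize $|\alpha^z_i|$ within the sign-and-magnitude bounds required for any lower bound, and the same characterization lemma, read dually, yields both the lower-bound property and the maximality.

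The main obstacle will be bounded completeness for arbitrary families. Given $A \subseteq \RA_1$ with a common upper bound $\hat{u}$, the inclusion $\gamma(a) \subseteq \gamma(\hat{u})$ for every $a \in A$ bounds the width $\normL{\hat{a}} + \beta^a$ uniformly in $a$, hence gives a uniform $\ell_1$ bound on the coefficients. I would then define the candidate supremum $\hat{z}$ coordinatewise: $\alpha^z_i$ is zero if the family $\{\alpha^a_i\}_{a \in A}$ contains coefficients of both strict signs, and otherwise the element of smallest absolute value; $\alpha^z_0$ is the midpoint of the smallest interval containing $\bigcup_{a \in A} \gamma(a)$; and $\beta^z$ is adjusted so that this interval becomes $\gamma(\hat{z})$. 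Boundedness ensures $\hat{z} \in \RA_1$, and verification that $\hat{z}$ is an upper bound, least among all upper bounds, proceeds coordinatewise exactly as in the binary case. Associativity of $\nabla$ from Definition and Lemma \ref{widen_1} gives coherence with finite subjoins, and the ordered Banach structure of Proposition \ref{orderedBanach} supports the passage to the infinite case through closure of $\ll$ under $\ell_1$-limits.
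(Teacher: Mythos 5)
Your proposal is correct in substance and handles the binary cases exactly as the paper does (the paper dismisses them as ``easy verification'' via the characterization lemma for $\ll$, which is precisely the route you spell out). Where you diverge is on bounded completeness: the paper filters $A$ by an increasing sequence of finite subsets $A_k$, forms the iterated joins $a_k = \nabla A_k$, observes that each $|\alpha^{a_k}_j|$ is a decreasing sequence of nonnegative reals with constant sign, and takes the coordinatewise limit; you instead write down the supremum in closed form, coordinate by coordinate. These are the same object seen from two sides, and your direct definition is arguably cleaner since it avoids any appeal to a choice of filtration and makes leastness a one-line coordinatewise check. Two small cautions. First, for an infinite family with all $\alpha^a_i$ of one sign, ``the element of smallest absolute value'' need not exist (the infimum of the $|\alpha^a_i|$ may not be attained); you should define $\alpha^z_i$ as the common sign times $\inf_{a\in A}|\alpha^a_i|$, which is exactly what the paper's monotone-limit argument produces. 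Second, your closing appeal to Proposition \ref{orderedBanach} is slightly off target: that proposition establishes closedness of the perturbation order $\leq$ under $\ell_1$-limits, not of the computational order $\ll$; fortunately your construction never actually needs a limit argument, so the remark can simply be dropped. With those two adjustments the proof is complete and matches the paper's conclusion, including the verification that the resulting $\beta^z$ is nonnegative (it is, since the half-width of $\bigcup_{a\in A}\gamma(a)$ dominates $\normL{a}+\beta^a$ for each $a$, hence dominates $\normL{z}$).
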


\begin{sketch}
Easy verification for the binary unions and intersections. 
Take now $A \subseteq \RA_1$ and $b$ such that $b \geq A$. Then
$\gamma(a) \subseteq \gamma(b)$ for all $a \in A$, so $I=\cup_{a \in A}
\gamma(a)$ has finite bounds. This gives us $\alpha_0=mid(I)$. 

Consider now any countable 
filtration of $A$ by an increasing sequence of finite subsets
$A_k$, $k \in K$, and consider $a_k=\cup A_k$ any minimal upper bound of 
$A_k$. We know that $\mid \alpha^{a_k}_j \mid$ is a decreasing sequence
of positive real numbers when $k$ increases, so it converges. 
We also know that the sign of $\alpha^{a_k}_j$ remains constant, so 
$\alpha^{a_k}_j$ converges, say to $\alpha_j$. Last but not least, let
$\beta=\frac{\sup I - \inf I}{2}- \alpha_0 - \sum_{i=1}^{\infty} \mid \alpha_i \mid$,
we argue that $\alpha_0+\sum_{i=1}^{\infty} \alpha_i \varepsilon_i +\beta
\varepsilon_U$ is a minimal upper bound of $A$ in $\RA_1$.
$\Box$
\end{sketch}

This allows to use $\nabla$ (and $\Delta$) as effective widenings during
the iteration sequence for solving the least fixed point problem.

%---------------------------------------------------------------------------------------------
\section{Iteration schemes and convergence properties}

\label{convsection}

This is where all properties we studied fit together, to reach the
important Theorem \ref{thm}, stating good behavior of the Kleene-like
iteration schemes defined in Section \ref{iteration}. First, we show that
we must improve the computation of the abstract semantic functional, between
two union points, this is explained in Section \ref{shiftoperator}. We also
improve 
things a little bit, on the practical side, by defining new widening
operators, in Section \ref{realwiden}.

\subsection{The shift operator and the iteration scheme}

\label{shiftoperator}

One problem we encounter if we are doing the blind Kleene iteration in the lattice
of Proposition \ref{completelattice}, is
that we introduce $\varepsilon_U$
coefficients, for which the semantics of arithmetic expressions is far less
well behaved than for ``ordinary'' noise symbols $\varepsilon_i$. 

\begin{exem}
\label{exxmoinsax}
Let us give a first simple example of what can go wrong. Consider the
following program:
\begin{verbatim}
F(real a) {
  real x;
  x = input(-1,1); [1]
  while (true)
    x = x-a*x; [2] }
\end{verbatim}
Suppose that \texttt{a} can only be given values between (strictly) $0$ and
$1$, then it is easy to see that this scheme will converge towards zero,
no matter what the initial value of $x$ is. 
As the scheme is essentially equivalent, in
real numbers to $x_{n+1}=(1-a)x_n$, with $|1-a|<1$, a simple Kleene iteration
scheme should converge. Let us look at the successive iterates 
$\hat{x}_i$ at control point [2], of this scheme. First, note that
$\hat{x}_{i+1}=\varepsilon_1 \nabla (\hat{x}_i\hat{-}a\hat{x}_i)$
(or equivalently $\hat{x}_{i+1}=\hat{x}_i\nabla (\hat{x}_i\hat{-}a\hat{x}_i)$
starting with $\hat{x}_0=\varepsilon_1$), 
where
$\varepsilon_1$ stands for the noise symbol introduced by assignment at
control point [1]).
$$\begin{array}{rcl}
\hat{x}_{1} & = & \varepsilon_1 \\
\hat{x}_2 & = & (1-a)\varepsilon_1 + a \varepsilon_U \\
\hat{x}_3 & = & \varepsilon_1 \nabla \left((1-a)^2 \varepsilon_1+
a\varepsilon_U\hat{-}a^2\varepsilon_U\right) \\
& = & \varepsilon_1 \nabla \left((1-a)^2 \varepsilon_1+a(1+a)\varepsilon_U\right)
\end{array}$$
because the semantics of $\hat{-}$ on $\varepsilon_U$ symbols cannot cancel
out its coefficients {\em a priori}. We will see a bit later that under
some conditions, we can improve the semantics {\em locally}.

To carry on with this example, let us particularize the above scheme to
the case $a=\frac{3}{4}$:
$$\begin{array}{rcl}
\hat{x}_3 & = & \varepsilon_1 \nabla \left(\frac{1}{16}\varepsilon_1
+\frac{21}{16}\varepsilon_U \right) \\
& = & \frac{1}{16}\varepsilon_1+\frac{21}{16}\varepsilon_U 
\subseteq \left[-\frac{11}{8},\frac{11}{8}\right]
\end{array}$$
We already see that the concretization of $\hat{x}_3$ is bigger than $[-1,1]$
showing loss of precision, even to simple interval computations. The next iterations
make this interval grow to infinity. If we could have written 
\begin{equation}
\label{goodeq}
\hat{x}_{i+1}=\varepsilon_1 \nabla (\hat{x}_i-a\hat{x}_i)
\end{equation}
instead of \begin{equation}
\label{badeq}
\hat{x}_{i+1}=\varepsilon_1 \nabla (\hat{x}_i\hat{-}a\hat{x}_i)
\end{equation} the
iteration sequence would have been convergent. We are going to explain that we can make
sense of this.
\end{exem}

We first introduce a {\em shift} operator, that decreases the current 
abstract value.

\begin{deflem}
\label{shift}
Let 
$\hat{x}=\alpha_0^x + \sum_{i=1}^{\infty} \alpha_i^x \varepsilon_i
+\beta^x \varepsilon_U$. 
Define, for $\hat{x}$ with finitely many non-zero coefficients:
$$%\left\{
\begin{array}{rcl}
!\hat{x} & = &\alpha_0^x+\sum_{i=1}^{\infty} \alpha_i^x \varepsilon_i
+\beta^x \varepsilon_f \\ %& \mbox{if $\beta^x \geq 0$} \\
%!\hat{x} & = &\alpha_0^x+\sum_{i=1}^{\infty} \alpha_i^x \varepsilon_i
%& \mbox{otherwise}
\end{array}
%\right.
$$
where $\varepsilon_f$ is a ``fresh'' symbol. 
Then for all $\hat{x} \in \RA_+$, $! \hat{x}\leq \hat{x}$.
\end{deflem}

%\begin{sketch}
%Obvious computation.
%\end{sketch}

The idea is that, after some unions, during a Kleene iteration sequence,
such as right after the union in semantic equation \ref{badeq}, of
Example \ref{exxmoinsax}, we would like to apply the $!$ operator, allowing
us to get an equation equivalent to (\ref{goodeq}). 

The full formalization of this refined iteration scheme is outside the
scope of this paper. It basically relies on the following observation of
our abstract semantics:

%\begin{claim}
{\em All concrete executions of a program correspond to a unique choice
of values between -1 and 1, of $\varepsilon_1,\ldots,\varepsilon_n,\ldots,
\varepsilon_{U}$ (for all join control points). 
}%\end{claim}

Hence, locally, between two join control points, $\varepsilon_U$ coefficients
act as normal $\varepsilon_i$ coefficients. Hence 
one can use $!$ to carry on the evaluation of the abstract functionals
after each control point where a union was computed, corresponding to
a branching between several concrete executions. 

\begin{exem}
We carry on with the example \ref{exxmoinsax}. 
We use now the new semantic equation:
$\hat{x}_{i+1}=\varepsilon_1 \nabla (!\hat{x}_i\hat{-}a!\hat{x}_i)$.
Therefore, the successive iterates, for $a=\frac{3}{4}$ are:
$$\begin{array}{rcl}
\hat{x}_{1} & = & \varepsilon_1 \\
\hat{x}_2 & = & \frac{1}{4}\varepsilon_1 + \frac{3}{4}\varepsilon_U \\
\hat{x}_3 & = & \varepsilon_1 \nabla \left(\frac{1}{16} \varepsilon_1+
\frac{3}{4}\varepsilon_2\hat{-}\frac{9}{16}\varepsilon_2\right) \\
& = & \varepsilon_1 \nabla \left(\frac{1}{16}\varepsilon_1+\frac{3}{16}\varepsilon_2\right)
\end{array}$$
where $\varepsilon_2=!\varepsilon_U$ in the last iterate. Therefore,
%$$\begin{array}{rcl}
$\hat{x}_3 = \frac{1}{16}\varepsilon_1+\frac{15}{16}\varepsilon_U$.
%\end{array}$$
The successive iterates will converge very quickly to $\hat{x}_\infty=
\varepsilon_U$ with concretization being $[-1,1]$ and no surviving relation.
\end{exem}

%---------------------------------------------------------------------------------------------

%---------------------------------------------------------------------------------------------
\subsection{Iteration schemes}

\label{imp}
\label{iteration}

As we have almost bounded completeness, and not unconditional completeness,
our iteration schemes will be parametrized by a large interval $I$: as soon
as the current iterate leaves $I$, we end iteration by $\top$ (that
we can choose to represent by $\infty \epsilon_U$ by an abuse of notation).
The starting abstract value of the Kleene like iteration of Definition
\ref{iter} is as usual $\bot$, that, in theory, we should formally introduce
in the {\em lifted} domain of affine forms (but that, by an abuse of
notation, we can represent by $-\infty \epsilon_U$).

In order to get good results, we need in particular {\em cyclic unfolds}. 
They are defined below:

\begin{defi}
\label{iter}
Let $i$ and $c$ be any positive integers, $\sqcup$ be any of the
operators $\cup$ (for some choice of a mub), or $\nabla$. 

The $(i,c,\sqcup)$-iteration scheme of some functional $F: 
\hat{\var} \rightarrow \hat{\var}$ is as follows:
\begin{itemize}
\item First unroll $i$ times the Kleene iteration sequence, starting
from $\bot$, i.e. compute $x_1 = F^i(\bot)$.
\item Then iterate: $x_{n+1}=x_n \cup F^c(! x_n)$ starting with $n=1$.
\item End when a fixpoint is reached or with $\top$ if $\gamma(x_{n+1})
\not \subseteq I$.
\end{itemize}
\end{defi}

Note that initial unfolding are important for better precision but will not be used 
in the sequel.

\subsection{Convergence for linear recursive filters}

\label{linear}

We prove that our approach allow us to find good
estimates for the real bounds of general affine recurrences (i.e. 
{\em linear recursive filters} of any order), see Section \ref{affine}. 
The only abstract domains
known to be able to give accurate results are the one of \cite{Feret}, 
which only deals with filters of order 2, and
the one of \cite{compositional}, which is specialized for digital filters 
(which is not the case of our abstraction).

We consider again the class of programs of Section \ref{affine}.

\begin{theo}
\label{thm}
Suppose scheme (\ref{LRF}) has bounded outputs, i.e. the (complex) roots
of $x^n-\sum_{i=0}^{n-1} a_{i+1} x^i$ have module strictly less than 1.
Then there 
exists $q$ such that the $(0,q,\nabla)$-iteration scheme 
(see Section \ref{iteration})
converges towards a finite over-approximation of the output.

In other words, the perturbed numerical scheme solving the fixpoint
problem is also bounded.
\end{theo}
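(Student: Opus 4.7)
\medskip
\noindent\textbf{Proof plan.} The strategy is to turn the spectral condition on $A$ into a Banach-style contraction on affine forms, and then to verify that this contraction is compatible with both the shift operator $!$ and the widening $\nabla$ underlying the $(0,q,\nabla)$-iteration scheme.

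I first choose $q$. Since every eigenvalue of $A$ has modulus strictly less than one, the spectral radius satisfies $\rho(A)<1$, so by Gelfand's formula there exist $q\geq 1$ and $\rho\in(0,1)$ with $\|A^q\|_{\mathrm{op}}\leq\rho$ for the operator norm on $\R^{n+1}$ induced by the $\ell_1$ vector norm; this is the $q$ claimed in the statement. Next I unpack $F^q\circ\,!$. Between two widening points, the shift operator $!$ of Definition and Lemma \ref{shift} turns the $\varepsilon_U$ coefficient of every component of $\hat X_n$ into an ordinary fresh noise symbol, so all cancellations inside $F^q$ behave as in classical affine arithmetic (Section \ref{affine}). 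Consequently,
\[ F^q(!\hat X_n) \;=\; A^q(!\hat X_n) \;+\; \sum_{j=1}^{q} A^{q-j}\,B\,\hat E_{nq+j}, \]
with the arithmetic being exact, and a componentwise extension of the $\ell_1$-norms of Section \ref{Banach} yields the contraction $\|F^q(!\hat Z_1)-F^q(!\hat Z_2)\|_1 \leq \rho\,\|\hat Z_1-\hat Z_2\|_1$. The constant part is bounded uniformly in $n$ by $\|B\|_{\mathrm{op}}\,\max_j\|\hat E_{nq+j}\|_1\,\sum_{k=0}^{q-1}\|A\|_{\mathrm{op}}^k$, which is finite.

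Once a contraction is available on the Banach space $(\RA_1,\|\cdot\|_1)$ established in Proposition \ref{orderedBanach}, the pure Kleene iterates of the affine map $G\colon\hat Z\mapsto A^q\hat Z+C$ (with $C$ the constant part above) converge in $\|\cdot\|_1$ to a unique fixed point $\hat X^\ast$ of finite norm, and the whole trajectory remains inside a fixed $\ell_1$-ball of radius $R$ depending only on $\rho$, $\|C\|_1$ and the initial data. I would then check that replacing $G$ with the actual iteration step $\hat X_{n+1}=\hat X_n\,\nabla\,F^q(!\hat X_n)$ preserves this boundedness: by Definition and Lemma \ref{widen_1}, $\gamma(\hat X_{n+1})=\gamma(\hat X_n)\cup\gamma(F^q(!\hat X_n))$, and both sets lie in a bounded interval by the above estimate, so $\gamma(\hat X_n)\subseteq I$ for every $n$ as soon as the iteration interval $I$ is chosen large enough; hence the scheme never aborts with $\top$. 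Moreover, since the widening selects for $\alpha_i^{\hat X_{n+1}}$ the element of smallest absolute value between $\alpha_i^{\hat X_n}$ and $\alpha_i^{F^q(!\hat X_n)}$, the sequence $|\alpha_i^{\hat X_n}|$ is nonincreasing in $n$ for each fixed $i$, and together with the contraction estimate this forces the iterates to stabilise: eventually $F^q(!\hat X_n)\ll \hat X_n$ in the computational order of Section \ref{widen}, so $\hat X_{n+1}=\hat X_n$ and the iteration halts at a finite fixed point, which is the desired finite over-approximation.

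The hard part is the very last point. The widening $\nabla$ is not the join for $\leq$, so it may transiently inflate the $\beta$ coefficient in ways that are not obviously controlled by the contraction of the previous step. One therefore has to bookkeep carefully the support of noise symbols that appear in successive iterates: old symbols decaying by a factor of at most $\rho$ per $q$-block, fresh input symbols entering with uniformly bounded magnitude, and the fresh $\varepsilon_f$ produced by $!$ being re-absorbed at the next widening step. Estimating the evolution of $\beta^{\hat X_n}$ through these dynamics, and reconciling the $\|\cdot\|_1$-based contraction with the coefficient-wise $\ll$-order, is the technical core needed to conclude both that $\ll$-stabilisation happens in finite time and that the limit concretization is finite.
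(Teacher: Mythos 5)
Your choice of $q$ (Gelfand / spectral radius) and the observation that the shift operator makes $F^q$ exact on the affine part coincide with the paper's reduction to the case $\sum_j |c_j|<1$. But the core of your argument — a Banach contraction for $G:\hat Z\mapsto A^q\hat Z+C$ in the $\ell_1$ norm — is not the paper's mechanism, and the gap you yourself flag at the end is exactly where the argument fails to close. The contraction controls $\norm{F^q(!\hat Z_1)-F^q(!\hat Z_2)}_1$, but the actual iterate is $\hat X_{n+1}=\hat X_n\,\nabla\,F^q(!\hat X_n)$, and $\nabla$ is not non-expansive for $\norm{\cdot}_1$ in any usable sense: it recenters $\alpha_0$, shrinks the $\alpha_i$, and inflates $\beta$ to whatever is needed to cover the union of concretizations. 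Your claim that the coefficientwise monotonicity of the $|\alpha_i^{\hat X_n}|$ ``forces the iterates to stabilise'' says nothing about $\beta^{\hat X_n}$, which is precisely the quantity that determines whether the concretization stays finite; and the finite-time $\ll$-stabilisation you assert at the end is stronger than what is true (the paper is explicit that convergence is only asymptotic in exact arithmetic, reached in finite time only in finite precision). So as written the proposal establishes boundedness of the unwidened Kleene iterates of $G$, not of the $(0,q,\nabla)$-scheme.

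The paper closes this gap by changing level of abstraction rather than by bookkeeping norms. Since $\gamma(\hat x\,\nabla\,\hat y)=\gamma(\hat x)\cup\gamma(\hat y)$ by construction of the widening, and the abstract arithmetic over-approximates at the level of concretizations (Lemma \ref{base}), every iterate's concretization satisfies the \emph{interval} recurrence associated with the scheme; when $\sum_j|c_j|<1$ that interval iteration has a bounded least fixpoint $z^I$, so the whole ascending chain is dominated by $mid(z^I)+dev(z^I)\varepsilon_U$ — uniformly, widening included, because interval union cannot escape an interval pre-fixpoint. Quasi bounded completeness of $(\RA_1,\ll)$ (Proposition \ref{completelattice}) together with $\ell_1$-continuity of $F$ then supplies the least fixed point with finite concretization. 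If you want to salvage your contraction route, the missing lemma you would have to prove is a uniform bound on $\beta^{\hat X_{n+1}}$ in terms of $\beta^{\hat X_n}$ and the contraction constant through one $\nabla$ step; the paper's domination-by-intervals argument is the cheaper way to get the same conclusion.
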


\begin{sketch}
%Let us examine the fixpoints of functional $G(x)=x \nabla F(x)$.
Being a fixed point of abstract functional $F$ (giving the abstract
semantics of the one iteration of the loop) means
$$\begin{array}{rcl}
x_{k+n+1} & = & x_{n+1} \nabla \left(\sum_{i=1}^n a_i x_{k+i}+
\sum_{j=1}^{n+1} b_j e_{k+j}\right) \\
x_{n+1} & = & x_1 \nabla x_2 \nabla \ldots \nabla x_n \nabla x_{k+n+1} \\
x_{n+2} & = & x_2 \nabla \ldots \nabla x_n \nabla x_{k+n+1} \\
\ldots \\
x_{k+n} & = & x_n \nabla x_{k+n+1} 
\end{array}$$
Define
$$\begin{array}{lcl}
y_1 & = & x_{n} \\
y_2 & = & x_{n-1} \nabla x_n \\
\ldots \\
y_n & = & x_1 \nabla x_2 \nabla \ldots x_n \\
y_{n+1} & = & x_{n+1} \\
\end{array}$$
Then the fixpoints of $F$ are determined by the fixed points $z$:
\begin{eqnarray}
\label{fixpointeq}
z & = & x_{n+1} \nabla \left(\sum_{i=1}^n a_i (y_{n+1-i} \nabla z) +
\sum_{j=1}^{n+1} b_j e_{k+j} \right).
\end{eqnarray}

%Let us call $e_k=\sum_{j=1}^{n+1} b_j e_{k+j}$.
Suppose first that $\sum_{i=1}^{n} | a_i | < 1$. Consider now the
interval fixpoint equation resulting from (\ref{fixpointeq}). As
$\gamma$ commutes with $\nabla$, by definition, and because
of Lemma \ref{base}, it transforms into
$$\begin{array}{rcl}
\gamma(z) & \subseteq & \gamma(x_{n+1}) \cup \left(\sum_{i=1}^n \left(a_i \gamma(y_{n+1-i}) \cup \gamma(z) \right) \right.\\
& & \left.+ \sum_{j=1}^{n+1} b_j \gamma(e_{k+j})\right).
\end{array}$$
This equation shows that $\gamma(z)$ 
is a pre-fixpoint of the interval abstraction
of our linear scheme. It is well known that in the case 
$\sum_{i=1}^{n} | a_i | < 1$, this interval abstraction admits a bounded
least fixpoint $z^I$. Hence, $z$ in this case is bounded by $z^I$
(for order $\leq$, when $z^I$ is 
written as $mid(z^I)+dev(z^I) \varepsilon_U$,
with $dev([a,b])=\frac{b-a}{2}$), 
hence has finite concretization. In fact, not only $z$ but
all the ascending sequence of the $(0,1,\nabla)$-iteration scheme from 
$\bot$ is bounded by $z^I$. Note that any ascending sequence for
any $(p,q,\nabla)$-iteration scheme is also ascending for the partial
order $\ll$. By Proposition \ref{completelattice}, it has a least upper bound,
which is the least fixed point of $F$ for partial order $\ll$, because
of the obvious continuity (in the $\ell_1$ sense) of $F$. Hence again, 
this fixed point is bounded by $z^I$ so has finite concretization.

Secondly, if the roots of $x^n-\sum_{i=0}^{n-1} a_{i+1} x^i$ have module strictly less than 1, 
then there exists $q$ such that $F^q$ is a filter of order
$nq$ in the inputs $e$, and $n$ in the outputs with coefficients $c_j$, 
$j=1,\ldots,n$ such that $\sum_{i=1}^{n}
|c_j|$ is strictly less than 1. One can check that the semantics on
affine forms is exact on affine computations (because of the use of
the shift operator). We can then
apply the result above to reach
the conclusion. $\Box$
\end{sketch}

More generally, and this is beyond the scope of this article, we can show
that there exist $(i,q,\nabla)$-iteration schemes that will come 
{\em as close
as we want} to the exact range of values that $x$ can take. 

\begin{exem}
We carry on with Example \ref{transf2}. 
We see that matrix $A$ in our case is $$A=\left(\begin{array}{cc}
0 & 1 \\
-0.7 & 1.4 
\end{array}\right)$$
and of course, the $\ell_1$ norm of the rows of $A$ is bigger than 1. 
Iterating $A$, we see that:
$$%\begin{array}{cc}
%A^4 = \left( \begin{array}{cc}
%
%\end{array} \right)
%&
A^5 = \left( \begin{array}{cc}
-0.5488 & 0.2156 \\
-0.15092 & -0.24696
\end{array} \right)
%\end{array}
$$
is the first iterate of $A$ with $\ell_1$ norm of rows less than 1. 
We know by Theorem \ref{thm} that a $(0,5,\nabla)$-iteration scheme
will eventually converge to an upper approximation of the invariant
(which we can estimate, see Example \ref{transf22}, to 
[-1.12124069...,2.82431841...]). 
Here is what $(0,i,\nabla)$-iteration schemes give as last
invariant and concretization, when $i$ is greater than 5
(rounded, for purposes of readability):

\begin{center}
\begin{tabular}{|c|c|c|}
\hline
i & invariant & concretization \\
\hline
5 & 0.8333+2.4661$\varepsilon_U$ & [-1.6328,3.2995] \\
\hline
%8 & 0.83333333333334+6.95830834369728$\varepsilon_U$
%& [-6.12497501036395,7.79164167703062] \\
\hline
16 & 0.7621+2.06212$\varepsilon_U$ & [-1.3,2.8244] \\
\hline
\end{tabular}
\end{center}

Note that although the convergence to this invariant is {\em asymptotic}
(meaning that we would need in theory an infinite Kleene iteration 
to reach the invariant), in {\em finite precision}, the limit is
reached in a finite number of steps. In the case of the
$(0,16,\nabla)$-iteration scheme, the fixpoint is reached after 18 iterations.
In some ways, we have replaced the numerical scheme (a filter of order
2 here), by an {\em abstract numerical scheme} which has similar convergence
properties, and can be simulated in a finite time and in a {\em guaranteed
manner}, accurately. We can also
use {\em extrapolation} or {\em widening} techniques, for which we will
show some results in Example \ref{usewidening}.

Note also that none of the noise symbols survived in the final invariant: 
there is no dependency left with the successive inputs, when looking at
the overall invariant. This is very easily shown on the first few Kleene
iterates already. We denote by $\hat{x}_i$ the affine form at control
point [2], $\hat{y}_i$ the affine form at control point [3], at iteration $i$, for the $(0,16,\nabla)$-iteration scheme. 
We have (as produced by our prototype implementation):
$$\begin{array}{rcl}
\hat{x}_1 & = & 0.8808+1.8593\varepsilon_U \\
\hat{y}_1 & = & 0.8808+0.01038
\varepsilon_1+0.0429\varepsilon_2+0.0369 \varepsilon_4 +\ldots\\
%& & 
%+0.0344\varepsilon_5+0.0161\varepsilon_7-0.0169\varepsilon_9-0.0568\varepsilon_{11} \\
%& & -0.0896\varepsilon_{13}-0.0979
%\varepsilon_{15}-0.0678\varepsilon_{17}+0.0041\varepsilon_{19}\\
& & +
0.1052\varepsilon_{21}+0.2046\varepsilon_{23}+0.2589\varepsilon_{25}+
0.2254\varepsilon_{27}\\
& & +0.081\varepsilon_{29}-0.16\varepsilon_{31}+
0.35\varepsilon_{33} \\
\hat{x}_2 & = & 
0.8422+1.9407 \varepsilon_U 
\\
%\hat{y}_2 & = & [0.84225893038149,0.84225893038615] e0+[-0.00000000000000,0.000000000000
%00] e3+[0.01037656800989,0.01037656801120] e31+[0.04289592772375,0.0428959277243
%9] e33+[-0.08141700110447,-0.08141700109944] e35+[0.03690981107223,0.03690981107
%261] e38+[0.03443445804143,0.03443445804164] e39+[0.01614061455098,0.01614061455
%110] e40+[-0.01691085381443,-0.01691085381437] e41+[-0.05687972841602,-0.0568797
%2841598] e42+[-0.08960109424001,-0.08960109423999] e43+[-0.09794543360001,-0.097
%94543359999] e44+[-0.06788930400000,-0.06788930400000] e45+[0.00414344000000,0.0
%0414344000000] e46+[0.10527160000000,0.10527160000000] e47+[0.20462400000000,0.2
%0462400000000] e48+[0.25886000000000,0.25886000000000] e49+[0.22540000000000,0.2
%2540000000000] e50+[0.08100000000000,0.08100000000000] e51+[-0.16000000000000,-0
%.16000000000000] e52+[0.35000000000000,0.35000000000000] e53 \\
\hat{x}_3 & = & 
0.8323+1.9688 \varepsilon_U \\
%\\
%\hat{y}_3 & = & \ldots
\ldots
\end{array}$$

Finally, you should note that Theorem \ref{thm} is not limited by
any means to finding invariants of such filter programs with independent
inputs, or independent initial conditions. For instance, if all the inputs
over time are equal, but unknown numbers between 0 and 1, the final invariant
has concretization [-0.1008,2.3298].

\end{exem}

\subsection{Simple widening operators}
\label{realwiden}

We can define numerous widening operators, among which the following:
\begin{deflem}
The operator $W$ defined by $\hat z = \hat x W \hat y$ such that
\begin{itemize}
\item $\alpha^z_0=mid\left( \gamma(\hat{x})\cup \gamma(\hat{y}) \right)$,
\item $\alpha^z_i=\alpha^x_i=\alpha^y_i$ for all $i \geq 1$ such that $\alpha^x_i=\alpha^y_i$,
\item $\alpha^z_i= 0$ for all $i \geq 1$ such that $\alpha^x_i \neq \alpha^y_i$,
\item $\beta^z= \sup \gamma(\hat{x}\cup \hat{y}) - \alpha^z_0-\normL{z}$
\end{itemize}
gives an upper bound of $\hat x$ and $\hat y$ that can be used 
as an efficient widening.
\end{deflem}

\begin{exem}
\label{usewidening}
Now we are carrying on with Example \ref{transf2}, but this time
we apply the widening defined above after 1 normal
iteration step. For $i$  equal to 5, fixpoint is reached at iteration 9, and for 
$i$ equal to 16, it is reached at iteration 4, with precision equivalent to the case 
without widening.
This time, convergence is reached in finite time, by construction (and
not because of ``topological'' convergence).
\end{exem}
%%%%%%%%%%%%%%%%%%%%%%%%%%%%%%%%%%%%%%%%%%%%%%%%%%%%%%%%%%%%%%%%%%%%%%%%%%%%%%%%%%%%%%%%%%%%%%
\section{Directions currently investigated}
\label{improv}
We discuss in this section very promising improvements of the above schemes, which we feel are 
important to mention here. But, as they are not fully formalized yet, we mostly demonstrate them 
on examples.
\label{betaneg}
%-------------------------------------------------------------------------------

%-------------------------------------------------------------------------------
\subsection{Iteration strategies for a refined join operation}

As we introduced (Definition \ref{shift}) the possibility to shift the union symbols 
to ``classical'' noise symbols in the iteration scheme, it becomes important to create as
few union symbols as possible, in order not to lose relations. 
This can be partly solved by an adapted refined iteration strategy : when there is 
a cycle of explicit dependency between variables, make the union only on one variable and
apply immediately the shift operator of Definition \ref{shift}, before this union is propagated 
to the other dependent variables.
\begin{exem}
Consider the following program, implementing a second-order filter (where xnp1 stands for $x_{n+1}$, 
xn stands for $x_{n}$ and xnm1 stands for $x_{n-1}$):
\begin{verbatim}
  real xnp1,xn,xnm1;
  xn = [0,1];
  while (true) {
    xnp1 = 1.2*xn - 0.8*xnm1;
    xnm1 = xn; xn = xnp1;    }
\end{verbatim}
In this program, we have $x_n^k = x_{n-1}^{k+1}$, where $k$ is the current iteration of the loop, 
so it is clearly a bad idea to make unions independently on $x_n$ and $x_{n-1}$.

Before the loop, $\hat{x}_n^0 = 0.5 + 0.5 \varepsilon_1$, and after first iteration of the loop,
$\hat{x}_{n-1}^1 = 0.5 + 0.5 \varepsilon_1$, $\hat{x}_n^1 = \hat{x}_{n+1}^1 = 0.6 + 0.6 \varepsilon_1$.

Applying the classical join operations to $\hat{x}_n$ and $\hat{x}_{n-1}$ at the beginning of the loop after first 
iteration gives
$\bar x_n^1 = \hat x_n^0 \cup \hat x_n^1 = 0.6 + 0.5 \varepsilon_1 + 0.1 \varepsilon_U$,
$\bar x_{n-1}^1 = \hat x_{n-1}^0 \cup 0 = 0.5 + 0.5 \varepsilon_U$.\\
Then, after applying the shift operator (with a new symbol $\varepsilon_2$ for $\bar x_n^1$, and 
a new symbol $\varepsilon_3$ for $\bar x_{n-1}^1$), we get $\hat x_n^2 = \hat x_{n+1}^2 = 0.32 + 0.6 \varepsilon_1 + 
0.12 \varepsilon_2 - 0.4 \varepsilon_3$, and $\gamma(\hat x_n^2) = [-0.8,1.44]$.

Then $\bar x_n^2 = \hat x_n^2 \cup !(\bar x_n^1) = 0.32 + 0.5 \varepsilon_1 + 0.1 \varepsilon_2 + 0.52 \varepsilon_U$,
 $\bar x_{n-1}^2 = \hat x_{n-1}^2 \cup !(\bar x_{n-1}^1) = !(\bar x_n^1) \cup !(\bar x_{n-1}^1) = 0.6 + 0.6 \varepsilon_U$,
and after applying a shift that creates new symbols $\varepsilon_4$ and $\varepsilon_5$, we get 
$x_n^3 = x_{n+1}^3 = -0.096 + 0.6 \varepsilon_1 + 0.12 \varepsilon_2 + 0.624 \varepsilon_4 - 0.48 \varepsilon_5$ 
and $\gamma(\hat x_n^3) = [-1.92,1.728]$.

Of course, in practice, cyclically unrolling the loop allows to 
care with the bad behavior of the scheme, but it is better to refine as well the iteration as follows.
 
Applying the join and shift operations on $x_n$ only, we write 
$\bar x_n^1 = !(\hat x_n^0 \cup \hat x_n^1) = 0.6 + 0.5 \varepsilon_1 + 0.1 \varepsilon_2$, and 
$\hat x_n^2 = 0.32 + 0.2 \varepsilon_1 + 0.12 \varepsilon_2$ , and $\gamma(\hat x_n^2) = [0,0.64]$.

Then $\bar x_n^2 = !(\hat x_n^1 \cup \hat x_n^2) = 0.6 + 0.2 \varepsilon_1 + 0.1 \varepsilon_2 + 0.3 \varepsilon_3$ and
$\hat x_n^3 = 0.24 - 0.16 \varepsilon_1 + 0.04 \varepsilon_2 + 0.36 \varepsilon_3$, $\gamma(\hat x_n^3) = [-0.32,0.8]$.
\end{exem}

We dealt here with an example where the dependencies between variables were explicit, but we can also generalize 
this and introduce symbols to explicit and preserve implicit dependency, as in Example \ref{ex_improve_2}:
\begin{exem}
\label{ex_improve_2}
Take the following program:
%\vspace*{-0.5cm}
\begin{verbatim}
real x,y;
x = 1; y = 0;
for (x=1; x<=10000 ; x++)
  y = y + 2;
\end{verbatim}
%\vspace*{-0.5cm}
If we apply standard union, we get for example after first iteration, union and shift :
$\bar x^1 = 1.5 + 0.5 \varepsilon_1$ and $\bar x^2 = 1 + \varepsilon_2$. 

We can not use the above strategy for using union on one variable only, because 
variables $x$ and $y$ are no explicitly linked. 

Still, we can observe that, before or inside the loop, $x$ and $y$ are always such that $y = 2x-2$. 
So, in order to construct  a joint union on $\hat x$ and $\hat y$ that preserves this relation, we can just use union 
and shift on $\hat x$ ($\bar x^1 = 1.5 + 0.5 \varepsilon_1$), and then simply deduce $y$ by 
$\hat y = 2\hat x-2= 1 + \varepsilon_1$, thus expressing relations that will be usable in further computation. 
And of course the same operation can be repeated again for the following iterations of the loop, as affine
relations are preserved by affine arithmetic :
here $\hat x^2 = \bar x^1 + 1 = 2.5 + 0.5 \varepsilon_1$ and $\hat y^2 = \hat y^1 + 2 = 3 + \varepsilon_1$, we still have 
$\hat y^2 = 2\hat x^2-2$.

And indeed, for practical realization, when making unions over a set of variables, it is easy for example 
to consider couple of variables, and first investigate whether or not they satisfy the same affine relation 
on the two states, and if it is the case propagate the union over one variable on the other.
\end{exem}

%----------------------------------------------------------------------------------------
\subsection{Refining again the join operator for disjunctive analysis}

\begin{exem}
Take the following program:
\begin{verbatim}
real x,y,z;
z = [0,1];
if (z < 0.5) {
  x = 1; y = -1; }
else { x = 0; y = 1; }
if (y > 0)
  x = x + y; 
\end{verbatim}
The result for $x$ of the execution of this toy example, is always $x=1$, whatever $z \in [0,1]$.

%With independent unions and shifts over $x$ and $y$, we have, when we enter the second test,
%$x = 0.5 + 0.5 \varepsilon_2$, $y=\varepsilon_3$, and we can not deduce $x=1$.

Expliciting the dependency between $x$ and $y$ (we can always do so with constants) : 
in the two branches taken after the first test, we have $\hat y = -2\hat x + 1$, then we write
$\hat x = 0.5 + 0.5 \varepsilon_2$, and $\hat y= - \varepsilon_2$ after joining the results from 
the two branches. Then, interpreting test $(y > 0)$ leads to add constraint $\varepsilon_2 < 0$, 
but this is not enough to deduce $x=1$.

Then, we can note that for example $\hat x = 0.5 + 0.5 \varepsilon_2$ should denote a disjoint 
union of two values 1 and 0, thus in this case symbol $\varepsilon_2$ no longer takes all values 
in $[-1,1]$, but only the two values -1 and 1. Then, when test $(y > 0)$ is true, then  
$\varepsilon_2=-1$, $\hat x = 0.5 - 0.5 \varepsilon_2 = 1$, and when it is false,
 then $\hat x$ is naturally equal to 1.
\end{exem}

%%%%%%%%%%%%%%%%%%%%%%%%%%%%%%%%%%%%%%%%%%%%%%%%%%%%%%%%%%%%%%%%%%%%%%%%%%%%%%%%%%%%%%%%%%%%%%
%\section{Related Work}

\section{Conclusion, Related and Future Work}

\label{conclusion}

We have {\em proved} that our abstract domain behaves well for an
interesting class of numerical programs. More work has yet to be done
on the formalisation of the shift operator (Section \ref{shiftoperator}) and
on more general schemes, such as some non-linear schemes of interest.
Many questions arise also from this work. For instance, can we replace
affine forms (but $\epsilon_U$) by higher-order Taylor models? 

Also, most of our proofs only rely
on the general properties of norms, and not specifically on $\ell_1$. 
What do we get with the $\ell_p$ norms, and in particular with the
standard Lorentz cone, when considering $\ell_2$? Many techniques are
available here that could help, in particular the techniques of {\em 
Second-Order Cone Programming}.

Our main convergence result (Theorem \ref{thm}) can be recast as a fixpoint
property of some general $(min,max,+)$ functions. Can we use 
policy iteration techniques \cite{CAV05,ESOP07} to help solve these? 
Last but not least, Property \ref{spectral} rings a bell and looks like
phenomena appearing with spectral measures (measures with value in 
a Banach space). Is this also generalizable to affine forms where
$\varepsilon_i$ are random variables of some sort? 

\paragraph{Acknowledgments} are due to St\'ephane Gaubert for 
interesting remarks during an earlier presentation of these results.

\end{document}